\newcommand{\multiline}[1]{%
  \begin{tabularx}{\dimexpr\linewidth-\ALG@thistlm}[t]{@{}X@{}}
    #1
  \end{tabularx}
}
\newcommand{\trace}{\textit{tr}}
\newcommand{\E}{\mathbb{E\,}}
\newcommand{\R}{\mathbb{R}}
\newtheorem{assumption}{Assumption}
\newtheorem{lemma}{Lemma}
\newtheorem{problem}{Problem}
\newtheorem{remark}{Remark}
\newtheorem{theorem}{Theorem}
\DeclareMathOperator*{\argmin}{arg\,min}
\newcommand{\add}[1]{{\color{black} #1}} 
\newcounter {dagger} 
\newcommand{\MPar}[1]{} 
\begin{document}
\title{Data-conforming data-driven control: avoiding premature generalizations beyond data}

\author{Mohammad S. Ramadan, \IEEEmembership{Member, IEEE},\,Evan Toler,\,Mihai Anitescu, \IEEEmembership{Member, IEEE}
\thanks{The authors are with the Mathematics and Computer Science Division, Argonne National Laboratory, Lemont, IL 60439, USA,  {\tt\footnotesize mramadan@anl.gov, etoler@anl.gov, anitescu@mcs.anl.gov.}}}

\maketitle
\thispagestyle{empty}
\pagestyle{empty}

\begin{abstract}
Data-driven and adaptive control approaches face the problem of introducing sudden distributional shifts beyond the distribution of data encountered during learning. Therefore, they are prone to invalidating the very assumptions used in their own construction. This is due to the linearity of the underlying system, inherently assumed and formulated in most data-driven control approaches, which may falsely generalize the behavior of the system beyond the behavior experienced in the data. This paper seeks to mitigate these problems by enforcing consistency of the newly designed closed-loop systems with data and slowing down any distributional shifts in the joint state-input space. This is achieved through incorporating affine regularization terms and linear matrix inequality constraints to data-driven approaches, resulting in convex semi-definite programs that can be efficiently solved by standard software packages. We discuss the optimality conditions of these programs and then conclude the paper with a numerical example that further highlights the problem of premature generalization beyond data and shows the effectiveness of our proposed approaches in enhancing the safety of data-driven control methods.
\end{abstract}

\begin{IEEEkeywords}
Data-driven control, adaptive control, system identification, robust control, offline reinforcement learning.
\end{IEEEkeywords}
\section{Introduction} \label{section: Introduction}
The development of adaptive control systems, while marked by significant theoretical advancements, has historically faced skepticism from practitioners, rooted in concerns over the reliability and robustness of such systems in real-world applications. Brian Anderson, in his article \cite{anderson2005failures}, explains reasons for this distrust and the inherent dangers of adaptive control algorithms. In this paper we highlight another fundamental vulnerability in adaptive learning and data-driven control algorithms: premature and often false generalization beyond the seen data. That is, hastily generalizing the behavior of the system beyond its behavior that was observed in the data. This is pervasive in modern data-driven control methods and can lead to catastrophic results in real-world applications. We  propose practical methods for overcoming this vulnerability in a computationally efficient manner and using standard off-the-shelf software packages.

Data has become a cornerstone across every scientific domain, and it is foundational for the advancements in artificial intelligence \cite{hey2009fourth}. In control theory, the reliance on data in control design is not new; and the subfields of system identification \cite{ljung1998system}, robust control \cite{dorato1987historical}, and adaptive control \cite{astrom2013adaptive}, pioneered decades ago, have long been central to the field of control. The historical insights these subfields have generated are still crucial to the development of modern data-driven control and reinforcement learning (RL) algorithms. Take, for example, the ``exciting'' input for the stability of adaptive control \cite{astrom1985commentary,marafioti2014persistently}, which is analogous to the role of stochastic policies in RL algorithms \cite{schulman2017proximal}, or the problem of the chaos phenomenon resulting from learning and control loops operating in comparable time scales, discovered in adaptive control \cite{mareels1988non} and later in RL \cite{wang2024fractal}. These insights have also been crucial to the development of dual control, and its automatic experiment design aspect \cite{ramadan2024extended,heirung2017dual}, which have their connections to the exploration vs exploitation trade-off in RL \cite{sutton2018reinforcement}. 

The aforementioned insights, particularly the difficulties they are coupled with, hindered the spread of adaptive control algorithms in real-world applications, and instead, practitioners and control theorists alike found refuge in the more established and guaranteed robust control algorithms \cite{safonov2012origins}. The importance of adaptive control cannot be excluded, however; and the recent surge in the application of data-driven control and RL algorithms (in power systems, for instance \cite{ekomwenrenren2023data,yuan2024reinforcement} ) and RL's role in the development of large language models \cite{bai2022training} revive this importance. Thus, we are motivated to further understand the problems of adaptive control and expand their remedies.

In this paper we highlight an extra vulnerability of adaptive and data-driven control, not addressed explicitly by Anderson in \cite{anderson2005failures}. This vulnerability is the premature and possibly false generalizations beyond data, resulting from extrapolating the behavior of the system beyond what was experienced in the data. This in turn can lead to compromising safety and performance under the existence of unmodeled nonlinearities in the true underlying system. 

Modern data-driven control methods, whether direct (model-free) \cite{coulson2019data} or indirect (model-based) \cite{willems2005note}, mostly assume the linearity of the data-generating dynamics (although a few approaches extend direct methods to very limited forms of nonlinearities \cite{de2023learning}). The problem with the linearity assumption is that it imposes the universality of the data, in the sense that the underlying dynamics must ``behave similarly'' (according to the same linear dynamics), beyond data and in any region of the state space. This assumption is often invalid, however, because various systems admit nonlinear effects beyond their engineered operating conditions and typical control design does not account for this situation.

We emphasize that in a system identification and data collection experiment, an identified system (or recorded data in the direct approaches case) is not necessarily valid under different experimental and data collection conditions. For example, a controller designed to be stabilizing/optimal with respect to the identified model (or recorded data) is  guaranteed to be stabilizing/optimal only with respect to this specific identified model. However, eventually this controller will be connected to the actual data-generating system. This connection likely will change the operating conditions of the system and therefore may invalidate the exact assumptions this controller was based on. This can possibly activate unmodeled nonlinearities that can result in instability or severely degraded performance. Therefore, even with batch learning and iterative identification in a much smaller time scale than control \cite{albertos2012iterative}, a change in the controller may still result in a rapid shift in the system's operating conditions, compromising safety and performance.

The methods we propose in this paper enforce the consistency between the designed closed-loop system and  the data encountered during learning. We achieve this consistency by augmenting the standard linear quadratic regulator (LQR) problem with affine regularization terms and linear matrix inequalities (LMIs) in the corresponding decision variables. The result is  an affine semi-definite program (SDP) with LMI constraints that can be solved, even for high-dimensional systems, efficiently and using off-the-shelf software packages. Furthermore, we introduce a single hyperparameter that enforces the consistency level; in other words, this hyperparameter is an exploration vs exploitation balance factor. Our method allows for an iterative identification (data collection) and control that prevents rapid distributional shifts and the related sudden violations of data consistency. 

In mathematical terms, this augmentation of the standard LQR problem includes working with the parametrization given by the controllability-type Gramian \cite{kailath1980linear}, which is also the steady-state covariance matrix of the system state. This makes it possible to relate the closed-loop state-input distribution to that of the data, using LMI constraints and/or regularization terms such as the Frobenius norm on covariance matrices or the Kullback--Liebler divergence between distributions. Throughout the paper we call a control design approach \textit{data-conforming} if it enforces the consistency with data according to our aforementioned methods.

Inspired by the problems addressed in \cite{anderson2005failures}, our problem is closest in spirit to the so-called Rohrs' counterexample \cite{rohrs1985robustness}, which sought to invalidate adaptive control algorithms in the 1980s. Part of Rohrs' argument is that every physical system has unmodeled (parasitic) high-frequency dynamics. Because contemporary adaptive control led to highly nonlinear loops even for simple linear systems, these loops could internally generate high-frequency signals, exciting the unmodeled dynamics and possibly leading to instability. Our concern regarding modern data-driven control is analogous to that of Rohrs' with adaptive control, but we target unmodeled nonlinear dynamics instead of unmodeled high-frequency ones. Through iterative identification and carefully expanding the closed-loop bandwidth, one can avoid exciting high-frequency dynamics \cite{anderson2002windsurfing}. Analogously, through iterative identification and careful data-conforming control design, one can avoid sudden exposures to new areas of the state space,  which potentially contain harmful, unmodeled nonlinearities.

Our data-conforming control design approach also bears resemblance to the unfalsified control approach \cite{cabral2004unfalsified} in enforcing  consistency with past data. Contrasting with this approach, which categorizes consistency through Boolean values, our approach is built around consistency in distances between distributions, allowing for generalized formulations that can cope with stochastic dynamics. Moreover, the method of \cite{cabral2004unfalsified} relies on model reference adaptive control, whereas our approach augments modern optimal control design via extending the standard LQR problem with affine regularization terms and LMI constraints. Our approach readily integrates with many modern control design formulations, allows for multivariate state-space systems, and can be extended to handle receding-horizon and input-output formulations.

The idea of consistency with data is also fundamental in the field of offline RL \cite{agarwal2020optimistic}, but there the main motivation is data efficiency by initially limiting exploration. Offline RL algorithms are generally complex \cite{fujimoto2021minimalist} and rely on stochastic nonlinear programming methods. Instead, our approach insists on computationally efficient methods that blend with modern control design techniques.

One can argue that modern data-driven policy gradient methods \cite{zhao2024data,fazel2018global} with small enough step sizes can prevent sudden distributional shifts and therefore enhance consistency with data. This is indeed a plausible argument. However, it is difficult to relate between the control gain step size and the corresponding distributional shift beyond a simple ($\epsilon-\delta$) continuity argument. Instead, our approach deals explicitly with distributions and directly dampens their shifts. Moreover, a very dampened policy gradient procedure may be slow and hence inadequate for many time-varying dynamics, limiting its applicability to constant or pseudo-constant dynamics. On the other hand, our approach decouples distributional shifts from learning system variations and therefore can accomplish both tasks effectively and simultaneously.

We conclude the paper with a simple yet telling numerical example that explains the premature generalization problem inherent in modern data-driven approaches and shows how our proposed solutions can mitigate this problem.

\section{Problem Formulation} \label{section: Problem Formulation}
Consider the dynamic system
\begin{align}
x_{k+1}&=f(x_k, u_k, w_k), \label{eq:stateEquation}
\end{align}
where $x_k\in\mathbb R^{r_x}$ is the state and $u_k\in\mathbb R^{r_u}$ is the control input. The function $f$ is bounded but unknown and possibly nonlinear. The exogenous disturbance $w_k\in\mathbb R^{r_x}$ is independent and identically distributed \add{with\MPar{8a} zero mean and covariance $W$}. It is also independent from $x_0$, the initial condition, which is also random and has finite mean and covariance.

The goal of this paper is to design a control law that minimizes the cost function \MPar{5a}
\begin{equation}
\begin{aligned}
    J &= \add{\lim_{T \to \infty} \E \left \{ \frac{1}{T} \sum_{k=0}^T \left \{ x_k^\top Q x_k + u_k^\top R u_k \right \} \right \}},\\
    &=\lim_{k \to \infty} \E \left \{x_k^\top Q x_k + u_k^\top R u_k \right \}, \label{eq:costFunction}
\end{aligned}
\end{equation}
where $Q \succeq 0$ and $R \succ 0$ are positive semi-definite and positive definite, respectively\footnote{In this paper all positive semi-definite and positive definite matrices are also symmetric.}, and the expectation averages over all the possible realizations of $x_0$ and $w_k$. \add{The\MPar{5b} two descriptions of the cost above are equivalent assuming $x_k$ and $u_k$ are bounded in their first two moments and are stationary processes as $k \to \infty$ (See the Appendix~\ref{Appendix:cost descriptions: sum and limit} for more details).}

Since $f$ is unknown, we assume that we have access to it only through experiments. Hence, we rely on the data-driven paradigm for the control design. During the first data collection experiment, we assume an initial control law 
\begin{align} \label{eq:initial control law form}
    u_k = \kappa_0 (x_k) + v_k,
\end{align}
that is locally stabilizing (not necessarily optimal) and $v_k$ is \add{a\MPar{8b} zero mean} persistently exciting  (PE) signal, in the following sense.

\begin{assumption}\label{assumption:PE}
    (The PE assumption \cite[condition~(6)]{de2019formulas}). The signal $v_k$ in \eqref{eq:initial control law form} is PE for \eqref{eq:stateEquation}. That is, given a natural number\footnote{This lower bound is the minimum number of measurements required to identify a linear model of the system \cite{willems2005note}.} $N \geq (r_u +1) r_x + r_u$, every control realization $\{u_{k}\}_{k=0}^{N}$ and the corresponding state realization $\{x_{k}\}_{k=0}^{N}$ produce the matrix
\begin{equation} \label{eq:data matrix}
     {D} := 
    \begin{bmatrix}
          X\\
          U
    \end{bmatrix}
    =
    \begin{bmatrix}
        x_0 & \hdots &x_{N}\\
        u_0 & \hdots &u_{N}
    \end{bmatrix}
\end{equation}
with full row rank $r_x + r_u$.
\end{assumption}

The PE assumption is typical in system identification and data-driven control design and equates to an effective experiment design and a minimal descriptive state definition. \add{It\MPar{6a} is important to clarify that this PE assumption makes it feasibly to achieve unbiased consistent identification when the underlying data-generating system is minimal, linear and noise-free. In the nonlinear case, Assumption~\ref{assumption:PE} simply equates to making it possible to identify a minimal linear model from data. The validity of this linear model depends on the complexity and averaged behavior of the underlying data-generating system.}

\begin{remark} \label{remark:kappa_0}
\add{The\MPar{4} feedback control law $\kappa_0$ can be linear or nonlinear, and $u_k$ can possibly include some measured human inputs and interactions (e.g. a driver or a pilot) in the experimental and data-collection phases. Whether $\kappa_0$ is locally stabilizing or the system is bounded-input, bounded-output (BIBO) stable ($\kappa_0$ is possibly zero and $v_k$'s covariance is small enough), we assume a safe experimental data collection phase that results in bounded data. Without strong restrictions on the structure of the underlying dynamics, or allowing for some aggressive and possibly destructive testing, the above assumptions are inevitable and hence are adopted in various adaptive control, data-driven control, reinforcement learning and system identification  frameworks \cite{schon2011system,ljung1998system,fazel2018global,coulson2019data}.}
\end{remark}

\begin{assumption} \label{assumption:Gaussianity}
The data \eqref{eq:data matrix} is centered around zero, \add{can be sufficiently fitted by an approximate linear model,} and the state-input joint distribution resembles a multivariate Gaussian with a positive definite covariance. 
\end{assumption}

\subsubsection*{Problem statement} Starting from the initial control law of the form \eqref{eq:initial control law form} and given access to the resulting input and state data \eqref{eq:data matrix}, with the possibility of iteratively changing the control law and collecting new data, design a linear state feedback gain $K$ that minimizes the cost $J$ in \eqref{eq:costFunction}.

Assumptions~\ref{assumption:PE} and \ref{assumption:Gaussianity} are design assumptions that make it possible to meet the problem statement aforementioned with computationally efficient solutions (building on the imposed linearity and convexity). These assumptions are true if the underlying system is linear with minimal state description. However, since we are specifically targeting nonlinear systems in this paper, we require that the data generated by the underlying nonlinear system can be approximately generated by a linear model, when each is under the same control law in closed-loop. This flexibility is to address the possibility that altering the operational conditions may activate different nonlinearities, demanding different model approximations to compensate. Although the existence of an accurate linear approximation under a certain control law is generally not guaranteed for general nonlinear systems, our assumptions are still weaker than the small signal model assumption \cite{coulson2019data} or adopting one fixed approximate linear model as in various data-driven control methods.

\add{One\MPar{12b} may ask about the purpose of considering a general nonlinear system in \eqref{eq:stateEquation} if the data can be predicted by an approximate linear model. The reason is to remember that once a new controller is designed, the resulting closed-loop system may operate in regions of the state-input space not adequately explored by the learning data and over which this approximate linear model is no longer a valid approximation. This point can be easily forgotten if we start with a linear system in \eqref{eq:stateEquation}. Therefore, the nonlinearity of \eqref{eq:stateEquation} is a reminder that even if $f$ can be approximated by a linear model under certain conditions, this approximation may not be extrapolated to different conditions.}

\section{Background} \label{section: Background}
In this section we start by presenting the standard LQR problem. Although the problem formulation targets the nonlinear dynamics \eqref{eq:stateEquation}, our derivations in the subsequent sections assume the existence of a valid linear approximation under each operating condition.

\subsection{Standard LQR}
Suppose, in this subsection only, that the state-space system \eqref{eq:stateEquation} is linear,
\begin{align*}
x_{k+1} = f(x_k,u_k,w_k) = A x_k + B u_k + w_k,
\end{align*}
where $x_0, w_k$ are random variables and their statistics are as in \eqref{eq:stateEquation}, and the pair $(A,B)$ is controllable \cite{kailath1980linear}. The LQR problem is  obtaining a controller of the form $u_k = K x_k + v_k$, where $v_k$ is a white noise of zero mean and a fixed user-defined covariance $V \succ 0$, to minimize \eqref{eq:costFunction}. The signal $v_k$ is to preserve Assumption~\ref{assumption:PE} for future or iterative control design.

The cost function \eqref{eq:costFunction}, up to an additive constant in $K$, can be rewritten as (for a detailed derivation, check Appendix~\ref{Appendix:P parametrization cost})
\begin{align} \label{eq:cost Observability type}
    J = \trace \left ( P \left [ W + BV B^\top \right ] \right ),
\end{align}
where $P$ is the observability-type Gramian \cite{kailath1980linear}, given by
\begin{align}
    P = \sum_{k=0}^\infty  [A+BK]^{k\,\top} \left [ Q + K^\top R K \right ] [A+BK]^k,
    \label{eq:Observability Gramian}
\end{align}
where $Q \succeq 0$ and $R \succ 0$ are the user-defined weighting matrices, as in \eqref{eq:costFunction}, and $(Q,A)$ is assumed detectable. The optimal control gain $K$ (which minimizes $J$) is then given by the solution to the following.
\begin{quote}
\itshape
Standard LQR (Observability-type Gramian) \cite[Sec.~4.1]{bertsekas2012dynamic}:
\begin{align*}
        \text{Evaluate }K = -\left [R+B^\top P B \right ]^{-1} B^\top P A,
\end{align*}
when $P$ satisfies the algebraic Riccati equation
\begin{align*}
     0 = Q + A^\top P A - P -A^\top P B \left [ R + B^\top P B \right]^{-1} B^\top P A.
\end{align*}
\end{quote}

The problem above results from the square completion in $K$ of the Lyapunov equation
\begin{align}
     0 = [A+BK]^{\top} P [A+BK] - P + Q + K^\top R K, \label{eq:Lyap observability P}
\end{align}
then chosing $K = -\left [R+B^\top P B \right ]^{-1} B^\top P A$, which corresponds to the minimum $P$ (in the Loewner ($\succeq$) ordering sense).

There is an equivalent characterization to solving the LQR problem. Using the linearity and the cyclic property of the trace, the cost can be written as (check Appendix~\ref{Appendix:P parametrization cost} for more details)
\begin{equation}\label{eq:cost Controllability type}
\begin{aligned} 
    J &= \trace \left (\left [ Q + K^\top R K \right ] \Sigma  \right ),\\
    &= \trace \left ( Q \Sigma  \right )  +\trace \left ( R K \Sigma  K^\top \right ),
\end{aligned}
\end{equation}
where $\Sigma $ is the controllability-type Gramian, given by
\begin{align}
    \Sigma  &= \lim_{k \to \infty}\E \left \{x_k x_k^\top \right \}, \label{eq:Sigma as state covariance}\\
    &=\sum_{k=0}^\infty \left [ A+BK\right ]^k\left [ W + BV B^\top \right ]\left [ A+BK\right ]^{k\,\top}, \nonumber
\end{align}
and is also the solution of the Lyapunov equation
\begin{align}
    \Sigma = \left [ A+BK\right ] \Sigma \left [ A+BK\right ]^\top + W + BV B^\top. \label{eq:Controllability Lyapunov}
\end{align}
Since $V \succ 0$ and the pair $(A,B)$ is controllable, the controllability Gramian satisfies $\Sigma \succ 0$.

Using this parametrization, we first define the extra variable $Z_0$, such that $Z_0 \succeq K \Sigma  K^\top$, which is equivalent to (the Schur complement corresponding to) the LMI\footnote{Since congruence preserves definiteness, that is, for any full rank matrix $T$, $T \Sigma T^\top \succ 0$ if and only if $\Sigma \succ 0$. $T$ is taken to be $T = \begin{bmatrix}
I & -K \\ 
0 & I
\end{bmatrix}$ in the above instance, and hence, both $Z_0 - K \Sigma  K^\top$ and $\Sigma$ appear on the block diagonal of $T \Sigma T^\top$. Since $\Sigma \succ0 $, it must be true that $Z_0 - K \Sigma  K^\top \succeq 0$.}
\begin{align*}
    \begin{bmatrix}
        Z_0 & K \Sigma\\
        \Sigma K^\top & \Sigma
    \end{bmatrix} \succeq 0 , \text{ or, }
    \begin{bmatrix}
        Z_0 & L\\
        L^\top & \Sigma
    \end{bmatrix} \succeq 0,
\end{align*}
using the change of variables $L = K \Sigma$. The stability-imposing inequality relaxation of \eqref{eq:Controllability Lyapunov} (we discuss the consequences of this relaxation in Section~\ref{section:optimality conditions}) can also be described as an LMI, that is,
\begin{align*}
    &\Sigma \succeq \left [ A+BK\right ] \Sigma \left [ A+BK\right ]^\top + W + BV B^\top \\
    &\iff \Sigma \succeq \left [ A\Sigma+BL\right ] \Sigma^{-1} \left [ A\Sigma+BL\right ]^\top + W + BV B^\top \\
    &\iff 
    \begin{bmatrix}
        \Sigma-W-B V B^\top & A \Sigma + B L \\
        \Sigma A^\top + L^\top B^\top & \Sigma
    \end{bmatrix}\succeq 0.
\end{align*}
Therefore, the LQR problem can be described as an SDP with affine cost and LMI constraints \cite{boyd1993control}.
\begin{quote}
\itshape
Standard LQR (Controllability-type Gramian):\footnote{From a numerical perspective, to avoid open sets in the domain of the optimization problem, $\Sigma \succ 0$ can be replaced by $\Sigma \succeq \epsilon_0 I$, where $\epsilon_0>0$ is very small.}
\begin{equation*}
    \begin{aligned}
        &\min_{\Sigma ,\,L,\,Z_0} \trace \left (  Q\Sigma  \right ) + \trace \left ( R Z_0 \right) \\
        &\text{s.t. }\Sigma \succ 0, \quad
        \begin{bmatrix}
            Z_0 & L\\
            L^\top & \Sigma 
        \end{bmatrix}\succeq 0,\\
        &\begin{bmatrix}
        \Sigma-W-B V B^\top & A \Sigma + B L \\
        \Sigma A^\top + L^\top B^\top & \Sigma
    \end{bmatrix}\succeq 0,
    \end{aligned} 
\end{equation*}
where the change of variables $L=K \Sigma $ has been used and the optimal control is then recovered from the optimal values $\Sigma_\star$ and $L_\star$ through $K_\star = L_\star \Sigma_\star^{-1}$.
\end{quote}

\add{The\MPar{9a} computational complexity of the above SDP (and all the SDPs in this paper) is of $\mathcal{O}(r_x^3)$, where $r_x$ is the state-dimension. This is similar to the computational complexity of solving the corresponding algebraic Riccati equation, as it involves matrix decompositions and inversions typically of $\mathcal{O}(r_x^3)$ as well \cite{boyd1994linear}.}

The choice of using the controllability-type parametrization of the cost \eqref{eq:cost Controllability type} is motivated by \eqref{eq:Sigma as state covariance}; the matrix $\Sigma $ is the steady-state state covariance matrix and thus has a statistical interpretation. Since this paper is in the context of data-driven control, the matrix $\Sigma$ is of great importance, and having it appear explicitly is a key. We show this in Section~\ref{section: infinite-horizon K}.

\subsection{Data-driven certainty equivalence}
In the case of unknown system matrices $A$ and $B$, conventional (indirect) data-driven approaches identify a model $(\widehat A,\widehat B)$ and $\widehat W$, then solve the standard LQR problem with the identified model assumed to represent the true system exactly, a procedure termed certainty equivalence LQR.

The estimate model $(\widehat A,\widehat B)$ and $\widehat W$ can be obtained as a solution to the least-squares problem \cite{zhao2024data}
\begin{align} \label{eq:linear state-space ID}
\begin{aligned}
[ \widehat A, \widehat B] &= \argmin_{ A, B} \left \lVert   \widehat X_{err}
    \right \rVert_F,\\
    \widehat W &= \frac{1}{N} \widehat X_{err} \widehat X_{err}^\top,\\
    \widehat X_{err} &=X_1 - [A,  B] 
    \begin{bmatrix}
          X_0 \\
          U_0
    \end{bmatrix},
     \end{aligned}
\end{align}
where $\lVert \cdot \rVert_F$ is the Frobenius norm. Using the acquired data \eqref{eq:data matrix}, we let $  X_0:=[x_0,\hdots,x_{N-1}]$, $  U_0:=[u_0,\hdots,u_{N-1}]$, and $  X_1:=[x_{1},\hdots,x_{N}]$.

The certainty equivalence LQR problem can be decomposed into an identification problem and then a control design problem. We describe this two-level approach as the following problem.

\begin{problem} \label{prob:CE-LQR}
Certainty equivalence LQR:
\begin{equation*}
    \begin{aligned}
        &\min_{\Sigma ,\,L,\,Z_0} \trace \left (  Q\Sigma  \right ) + \trace \left ( R Z_0 \right) \\
        &\text{s.t. }\Sigma \succ 0, \quad
        \begin{bmatrix}
            Z_0 & L\\
            L^\top & \Sigma 
        \end{bmatrix}\succeq 0,\\
        &\begin{bmatrix}
        \Sigma-\widehat W-\widehat B V \widehat B^\top & \widehat A \Sigma + \widehat B L \\
        \Sigma \widehat A^\top + L^\top \widehat B^\top & \Sigma
    \end{bmatrix}\succeq 0,\\
        & \widehat A, \widehat B, \widehat W \text{ as in \eqref{eq:linear state-space ID}}.
    \end{aligned} 
\end{equation*}
The optimal control is recovered from the optimal values $\Sigma_\star$ and $L_\star$ through $K_\star = L_\star \Sigma_\star ^{-1}$.
\end{problem}

This data-driven approach is a basic certainty equivalence control design approach. For a discussion of the robustness and the sample efficiency, and to see how to enforce robustness under the existence of noise ($w_k \neq 0$) when the underlying model \eqref{eq:stateEquation} is linear, an interested reader can consult \cite{dean2020sample} and \cite{van2021matrix}. In this paper, however, we tackle the problem of the false generalization of nonlinear systems by linear ones, beyond data, which is implicit in data-driven control algorithms.

Notice if the underlying system is indeed linear, the region of the state-input space from which the data \eqref{eq:data matrix} has been acquired is irrelevant, as long as the PE assumption, Assumption~\ref{assumption:PE}, is met. The reason is that if the system is linear, predicting its behavior in any region equates to predicting its behavior universally across the state-input space. This generally does not hold if the underlying system is nonlinear.

One shortcoming of using Problem~\ref{prob:CE-LQR} as the optimal control model is that once a linear model is identified and the resulting gain matrix $K$ is applied in feedback, the region in the state space, where the new closed-loop system is operating, may not be within or close to the region explored by the data \eqref{eq:data matrix} in the identification step. Therefore, if the underlying system is nonlinear\footnote{This is typically the realistic case, as many systems that are considered linear are not so beyond some operating conditions.}, the new controller may force the system to go to regions of the domain of $f$ over which the system admits behaviors that are very different from what was experienced in the data. This implicit and inherent false universal generalization may in turn result in serious violations of safety or lead to a degraded performance.

\section{Data-conforming data-driven control} \label{section: infinite-horizon K}

In practice, for a controller to conform to the learning data, a measure of similarity between distributions has to be adopted and augmented to the cost/constraints of the LQR problem. For this purpose, several approaches and simplifications can be considered. We explore some of them in this section.

\subsection{Conforming to the state data}
Notice that Problem~\ref{prob:CE-LQR} is a convex optimization problem and that the decision variable $\Sigma $, the steady-state state covariance matrix, can be enforced to be similar to the state empirical covariance from data, via convex constraints or regularization.

Using \eqref{eq:data matrix}, let $\Sigma_{data}$ be the state empirical covariance:
\begin{align}
    \Sigma_{data} = \frac{1}{N+1}  {X} {X}^\top, \label{eq:Sigma_data}
\end{align}
\add{which\MPar{3a} is the maximum likelihood estimate of $\Sigma$ given $X$. $\Sigma_{data}$ is unbiased and follows a Wishart distribution \cite[Ch.~7]{anderson1958introduction}. The generalized variance, that is $\det \Sigma_{data}$, is asymptotically normal with mean $\det \Sigma$ and variance $2 r_x \left (\det \Sigma \right)^2 / N$ \cite[Thm.~7.5.4]{anderson1958introduction}.}

The data-conforming (in the state) version of Problem~\ref{prob:CE-LQR} can now be stated as follows.

\begin{problem} \label{prob:CE-LQR-state-conforming-hard}
Data-conforming (in the state) LQR via a hard constraint:
\begin{equation*}
    \begin{aligned}
        &\min_{\Sigma ,\,L,\,Z_0} \trace \left (  Q\Sigma  \right ) + \trace \left ( R Z_0 \right) \\
        &\text{s.t. }\Sigma \succ 0,  \quad
        \begin{bmatrix}
            Z_0 & L\\
            L^\top & \Sigma 
        \end{bmatrix}\succeq 0, \quad
        \Sigma  = \Sigma_{data},\\
        &\begin{bmatrix}
        \Sigma-\widehat W-\widehat B V \widehat B^\top & \widehat A \Sigma + \widehat B L \\
        \Sigma \widehat A^\top + L^\top \widehat B^\top & \Sigma
    \end{bmatrix}\succeq 0,\\
        & \widehat A, \widehat B, \widehat W \text{ as in \eqref{eq:linear state-space ID}},
    \end{aligned} 
\end{equation*}
and, similarly, the optimal control is recovered from the optimal values $\Sigma_\star $ and $L_\star$ through $K_\star = L_\star \Sigma_\star ^{-1}$.
\end{problem}

The new hard linear constraint $\Sigma  = \Sigma_{data}$ has two shortcomings:  (i) possible numerical instabilities and/or feasibility issues and (ii) limits to exploration beyond learning data. That is, potential future improvements via some exploration is not possible because the new controller will always seek to generate the same state distribution. This can be relaxed by using some margin, say, $\Sigma  \succeq \Sigma_{data} - \epsilon \mathbb I$ and $\Sigma  \preceq \Sigma_{data} + \epsilon \mathbb I$, for some $\epsilon>0$, predetermined or possibly included in the cost as a slack variable. This flexibility in the constraint, resembled 
in $\epsilon$, can determine the exploration vs exploitation balance of the new control design.

One can also design a convex regularization term using some norm on the covariance matrices. For example, the squared Frobenius norm $\lVert \Sigma  - \Sigma_{data} \rVert_F^2 = \trace \left ( \left[\Sigma  - \Sigma_{data} \right]\left [\Sigma  - \Sigma_{data} \right]^\top\right)$ can be used by minimizing $\trace \left( Z'\right)$, where $Z' \succeq \left[\Sigma  - \Sigma_{data} \right]\left [\Sigma  - \Sigma_{data} \right]^\top$, or equivalently as an LMI \cite{boyd1994linear},
\begin{align*}
\begin{bmatrix}
        Z' & \Sigma  - \Sigma_{data}\\
        \Sigma  - \Sigma_{data}& I
    \end{bmatrix} \succeq 0.
\end{align*}
We use this regularization term in the following modified problem.

\begin{problem} \label{prob:CE-LQR-state-conforming-regularization}
Data-conforming (in the state) LQR via regularization:
\begin{equation*}
    \begin{aligned}
        &\min_{\Sigma ,\,L,\,Z_0,\,Z'} \trace \left (  Q\Sigma  \right ) + \trace \left ( R Z_0 \right) + \gamma' \trace \left (Z' \right ) \\
        &\text{s.t. }\Sigma \succ 0, \quad
        \begin{bmatrix}
            Z_0 & L\\
            L^\top & \Sigma 
        \end{bmatrix}\succeq 0,\\
        &\begin{bmatrix}
        \Sigma-\widehat W-\widehat B V \widehat B^\top & \widehat A \Sigma + \widehat B L \\
        \Sigma \widehat A^\top + L^\top \widehat B^\top & \Sigma
    \end{bmatrix}\succeq 0,\\
        &\begin{bmatrix}
            Z' & \Sigma  - \Sigma_{data}\\
            \Sigma  - \Sigma_{data}& I
        \end{bmatrix} \succeq 0,\\
        & \widehat A, \widehat B, \widehat W \text{ as in \eqref{eq:linear state-space ID}},
    \end{aligned} 
\end{equation*}
where $\gamma'>0$ and the optimal control is recovered from the optimal values $\Sigma_\star $ and $L_\star$ through $K_\star = L_\star \Sigma_\star ^{-1}$.
\end{problem}

The user-defined weight $\gamma'$  plays an analogous role to $\epsilon$ in the discussion succeeding Problem~\ref{prob:CE-LQR-state-conforming-hard}, in determining the exploration vs exploitation balance of the new control design.

\add{Note\MPar{1} that our problem formulation, in particular, equation \eqref{eq:stateEquation}, assumes that the state is fully observed; therefore, the identification step \eqref{eq:linear state-space ID} is a straightforward least-squares problem. For partial state observation, a state-space model can be estimated, for example, using an eigensystem realization algorithm \cite{peterson1995efficient}. In such cases, a state observer (such as a Kalman filter) can be used, and Problem~\ref{prob:CE-LQR-state-conforming-regularization} (along with all optimization problems in this paper) can be reformulated based on the state estimate provided by the observer. Alternatively, one could bypass the identification step and reformulate the data-conforming control under the behavioral data-driven design philosophy \cite{willems2005note,coulson2019data}, as demonstrated in \cite{ramadan2025floodgates}.}

Only the state data distribution is used in Problems~\ref{prob:CE-LQR-state-conforming-hard} and \ref{prob:CE-LQR-state-conforming-regularization}. This puts no emphasis on the input data distribution and carries the implicit assumption that the underlying true system \eqref{eq:stateEquation} is linear with constant coefficients in $u_k$. That is, the seen inputs have the same effect regardless of the system state at their occurrence. In the following subsection, however, we enforce data conformation in the state-input joint distribution instead, which is more suited to general nonlinear systems.

\subsection{Conforming to the joint state-input data}
We denote the steady-state state-input joint density of the new design and the one estimated from the data by\footnote{The notation $ \mathcal{N}(\mu', \Sigma')$ denotes a Gaussian density of mean $\mu'$ and covariance $\Sigma'$.} $ \mathcal{N}_{des} =  \mathcal{N}(\mu_{des}, \Gamma_{des})$ and $\mathcal{N}_{data} =  \mathcal{N}(\mu_{data}, \Gamma_{data})$, respectively. The Gaussian property of these densities and $\mu_{des}=\mu_{data}=0$ follow from Assumption~\ref{assumption:Gaussianity}. Therefore, these densities are fully characterized by their covariance matrices.

Since $u_k = K x_k + v_k$, the design covariance matrix satisfies
\begin{align} \label{eq:design covariance}
    \Gamma_{des} &= 
    \begin{bmatrix}
        \lim_{k \to \infty} \E x_k x_k^\top & \lim_{k \to \infty} \E x_k u_k^\top \\
        \lim_{k \to \infty} \E  u_k x_k^\top & \lim_{k \to \infty} \E  u_k u_k^\top
    \end{bmatrix}\nonumber\\
    &= 
    \begin{bmatrix}
        \Sigma & \Sigma K^\top \\
        K\Sigma & K\Sigma K^\top + V
    \end{bmatrix},
\end{align}
where $\Sigma$ is as defined in \eqref{eq:Sigma as state covariance}. The empirical covariance matrix satisfies
\begin{align}
\begin{aligned} \label{eq:Gamma data def}
    \Gamma_{data}&\approx \frac{1}{N+1}  \left [D -\mu_{data} \right ]  \left [ D-\mu_{data} \right ] ^\top \\
    &= 
    \begin{bmatrix}
         \Sigma_{data} &  H_{data}\\
         H_{data}^\top &  M_{data}
    \end{bmatrix},
\end{aligned}
\end{align}
where $H_{data} = (N+1)^{-1} X U^\top$ and $M_{data} = (N+1)^{-1} U U^\top$. \add{Similar\MPar{3b} to $\Sigma_{data}$, $\Gamma_{data}$ is also unbiased, follows a Wishart distribution, and has an asymptotically normal generalized variance \cite[Ch.~7]{anderson1958introduction}.}

It is not clear whether one can enforce the closeness of $\Gamma_{data}$ to $\Gamma_{des}$, as was done in Problems~\ref{prob:CE-LQR-state-conforming-hard} and \ref{prob:CE-LQR-state-conforming-regularization}, via LMIs in $\Sigma$ and $K$ (or $L$). Instead, we use the Kullback--Leibler  divergence, notated $d_{KL}( \cdot \mid \mid \cdot )$, between the densities $ \mathcal{N}_{des}$ and $ \mathcal{N}_{data}$, which reduces to an expression in terms of their covariances. This expression, under some relaxation, can be posed as an affine regularization term and LMI constraints in $\Sigma$ and $L$.

Although the KL divergence is not a full-fledged metric, it is non-negative, and zero if and only if the two distributions are equal (almost everywhere) \cite[Sec. 8.6]{thomas2006elements}. It is a measure of the inefficiency of assuming one distribution when the true distribution is another, and it has been used in the contexts of exploration vs. exploitation in artificial intelligence and RL \cite{schulman2017proximal} and in constructing ambiguity sets in distributionally robust optimization techniques \cite{mohajerin2018data}. 

\begin{lemma} \cite{pardo2018statistical}
    The Kullback--Leibler divergence metric between $ \mathcal{N}_{des}$ and $ \mathcal{N}_{data}$ is
    \begin{align}
        &d_{KL}(  \mathcal{N}_{des} \mid \mid  \mathcal{N}_{data}) = \int_{\R^{r_x + r_u}}  \mathcal{N}_{des}(x) \log \frac{  \mathcal N_{des}(x)}{  \mathcal N_{data}(x)} dx \nonumber\\
        &= \frac{1}{2} \Bigg [ \trace (\Gamma_{data}^{-1}\Gamma_{des}) + (\mu_{des} - \mu_{data})^\top \Gamma_{data}^{-1}(\mu_{des} - \mu_{data})\nonumber\\
        &\hskip 15mm - (r_x + r_u) + \log \frac{\text{det }\Gamma_{data}}{\text{det }\Gamma_{des}}\Bigg ]. \label{eq:KL divergence of Gaussians}
    \end{align} \hfill \qed
\end{lemma}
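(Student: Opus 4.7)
The plan is to compute the divergence directly from its integral definition by expanding the logarithm of the ratio of two Gaussian densities and then taking the expectation under $\mathcal N_{des}$. Writing
\begin{equation*}
  \mathcal N(\mu,\Gamma)(x) = (2\pi)^{-(r_x+r_u)/2}(\det\Gamma)^{-1/2}\exp\!\Big(\!-\tfrac{1}{2}(x-\mu)^{\top}\Gamma^{-1}(x-\mu)\Big),
\end{equation*}
the log–ratio $\log(\mathcal N_{des}(x)/\mathcal N_{data}(x))$ collapses to a deterministic log–det term $\tfrac12\log(\det\Gamma_{data}/\det\Gamma_{des})$ plus the difference of two quadratic forms. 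So the first step is to split the KL integral into these three pieces.

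Next I would evaluate each expectation under $\mathcal N_{des}$ using the standard trace identity for Gaussian quadratic forms: for a symmetric matrix $A$ and $x\sim\mathcal N(\mu_{des},\Gamma_{des})$,
\begin{equation*}
  \mathbb E\big[(x-\mu_{des})^{\top}A(x-\mu_{des})\big] = \mathrm{tr}(A\,\Gamma_{des}).
\end{equation*}
Applied with $A=\Gamma_{des}^{-1}$ this yields $r_x+r_u$, accounting for the $-(r_x+r_u)$ in the lemma. For the $\Gamma_{data}^{-1}$ term I would first re-center by writing $x-\mu_{data} = (x-\mu_{des}) + (\mu_{des}-\mu_{data})$, expanding the quadratic form, and taking expectation. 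The linear cross term vanishes because $\mathbb E[x-\mu_{des}]=0$, leaving exactly $\mathrm{tr}(\Gamma_{data}^{-1}\Gamma_{des}) + (\mu_{des}-\mu_{data})^{\top}\Gamma_{data}^{-1}(\mu_{des}-\mu_{data})$, which are the first two terms of \eqref{eq:KL divergence of Gaussians}. Assembling the three pieces and multiplying by $\tfrac12$ reproduces the stated expression.

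There is not really a difficult step here — the main technical content is the trace identity for Gaussian quadratic forms, which itself follows from the cyclic property of the trace together with $\mathbb E[(x-\mu)(x-\mu)^{\top}]=\Gamma$. The only bookkeeping care needed is the re-centering trick for the cross-covariance quadratic form, since the expectation of $(x-\mu_{data})^{\top}\Gamma_{data}^{-1}(x-\mu_{data})$ under $\mathcal N_{des}$ is not directly a trace identity until one shifts the variable. Since the result is standard and the lemma is attributed to a textbook reference, a full derivation is unnecessary and I would state that the identity follows from the computation above and refer the reader to \cite{pardo2018statistical}.
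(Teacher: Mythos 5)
Your derivation is correct: the split of the log-ratio into the log-determinant constant and two quadratic forms, the trace identity $\E\{(x-\mu_{des})^\top A (x-\mu_{des})\}=\trace(A\Gamma_{des})$ giving the $-(r_x+r_u)$ term, and the re-centering $x-\mu_{data}=(x-\mu_{des})+(\mu_{des}-\mu_{data})$ for the cross-covariance term together reproduce \eqref{eq:KL divergence of Gaussians} exactly. The paper itself gives no proof and simply cites \cite{pardo2018statistical}; your argument is the standard computation that the cited reference contains, so there is nothing to reconcile.
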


We drop the second term in \eqref{eq:KL divergence of Gaussians}, since $\mu_{des}=\mu_{data}=0$, and the additive constant $-(r_x + r_u)$, since it does not alter the optimal control. The modified KL-based regularization term is then
\begin{align*}
    \bar d_{KL}(  \mathcal{N}_{des} \mid \mid  \mathcal{N}_{data}) = \frac{1}{2} \Bigg [ \trace (\Gamma_{data}^{-1}\Gamma_{des}) + \log \frac{\text{det }\Gamma_{data}}{\text{det }\Gamma_{des}}\Bigg ].
\end{align*}

We now employ an approximation to reshape $\bar d_{KL}(  \mathcal{N}_{des} \mid \mid  \mathcal{N}_{data})$ into a form amenable to the linear control design machinery, using the following results.

\begin{lemma}\label{lemma:log det to tr log}
The algebraic equality
\begin{align} \label{eq:log det to tr log}
    \log \frac{\text{det }\Gamma_{data}}{\text{det }\Gamma_{des}} = \trace \log \left (  \Gamma_{data}\Gamma_{des}^{-1} \right)
\end{align}
holds.\footnote{The function $\log$ to the left is the real-valued logarithmic function while $\log$ to the right is the logarithm of a matrix, in the sense that a matrix exponential of the $\log$ of a matrix equals the matrix itself.}
\end{lemma}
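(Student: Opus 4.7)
The plan is to reduce the identity to the scalar identity $\log\det M = \trace \log M$, applied to the matrix $M := \Gamma_{data}\Gamma_{des}^{-1}$. Since $\det$ is multiplicative on square matrices of the same size, the left-hand side of \eqref{eq:log det to tr log} immediately rewrites as $\log\det M$, so the whole task collapses to justifying $\log\det M = \trace \log M$ and checking that $\log M$ is well defined in the first place.

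For the first point, I would note that although $M$ is not symmetric in general, it is similar to the symmetric positive-definite matrix $\widetilde M := \Gamma_{des}^{-1/2}\Gamma_{data}\Gamma_{des}^{-1/2}$ via the conjugation $M = \Gamma_{des}^{1/2}\,\widetilde M\,\Gamma_{des}^{-1/2}$. Here I am using that $\Gamma_{des}$ and $\Gamma_{data}$ are both positive definite (the former from Assumption~\ref{assumption:Gaussianity}, the latter from the PE assumption and the structural form~\eqref{eq:Gamma data def}), so the square roots exist. Consequently $M$ is diagonalizable with strictly positive real eigenvalues $\lambda_1,\dots,\lambda_{r_x+r_u}$, and the matrix logarithm $\log M$ is unambiguously defined by $\log M = S\,\mathrm{diag}(\log\lambda_i)\,S^{-1}$ for any diagonalizing $S$.

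The identity then follows by eigenvalue bookkeeping. Since similarity preserves the spectrum, both $\det$ and $\trace$ depend only on the eigenvalues: $\det M = \prod_i \lambda_i$, so $\log\det M = \sum_i \log\lambda_i$, while $\log M$ has eigenvalues $\log\lambda_i$, giving $\trace\log M = \sum_i \log\lambda_i$. Matching the two expressions yields \eqref{eq:log det to tr log} after substituting $\det M = \det\Gamma_{data}/\det\Gamma_{des}$.

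I do not expect a real obstacle here; the only subtle point is the well-definedness of $\log M$ for non-symmetric $M$, which is why I would explicitly exhibit the similarity to $\widetilde M \succ 0$ rather than appeal to a series expansion (which would require a spectral-radius condition that need not hold for arbitrary $\Gamma_{data},\Gamma_{des}$). Everything else is a one-line consequence of the spectral decomposition and the multiplicativity of the determinant.
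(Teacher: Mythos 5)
Your proof is correct, but it takes a slightly different route from the paper's. The paper's proof splits the left-hand side as $\log\det\Gamma_{data}+\log\det\Gamma_{des}^{-1}$, applies the identity $\log\det(\cdot)=\trace\log(\cdot)$ to each \emph{symmetric} positive definite factor separately, and then recombines the two traces into $\trace\log(\Gamma_{data}\Gamma_{des}^{-1})$. You instead work directly with the product $M=\Gamma_{data}\Gamma_{des}^{-1}$, exhibit its similarity to $\Gamma_{des}^{-1/2}\Gamma_{data}\Gamma_{des}^{-1/2}\succ 0$, and do the eigenvalue bookkeeping on $M$ itself. Your version buys something real: the subtle point in this lemma is precisely that $M$ is not symmetric, so both the well-definedness of $\log M$ and the validity of $\trace\log(AB)=\trace\log A+\trace\log B$ (which fails at the matrix level for non-commuting $A,B$ and only survives under the trace) need the spectral argument you supply. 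The paper's recombination step (3) implicitly relies on exactly this fact without spelling it out, so your proof makes explicit what the paper leaves to the cited linear-algebra folklore. The paper's route is shorter and leans on a standard identity twice; yours is self-contained and justifies the one step that is not purely formal.
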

\begin{proof}
This is immediate from a result in linear algebra for positive definite matrices. We can write
\begin{align*}
    \log \frac{\text{det }\Gamma_{data}}{\text{det }\Gamma_{des}} 
    &= \log \text{det }\Gamma_{data} - \log \text{det }\Gamma_{des},\\
    &=^{(1)} \log \text{det }\Gamma_{data} + \log \text{det }\Gamma_{des}^{-1},\\
    &=^{(2)} \trace \log \Gamma_{data} + \trace \log \Gamma_{des}^{-1},\\
    &=^{(3)} \trace \log \left ( \Gamma_{data}\Gamma_{des}^{-1} \right ),
\end{align*}
where $(1),(2),(3)$ are implied by the positive definiteness of $\Gamma_{des}$ and $\Gamma_{data}$, and the identity $\log \text{det}\, \cdot = \trace \log \cdot$ for nonsingular matrices.
\end{proof}

Toward an approximation of \eqref{eq:log det to tr log} that is convex in $\Gamma_{des}$, suppose $\Gamma_{data}\Gamma_{des}^{-1} \approx I$. That is, suppose $\Gamma_{data}\Gamma_{des}^{-1} = I + \Theta$ for a small perturbation matrix $\Theta$. The matrix logarithm can be written in terms of its Taylor expansion as
\begin{align*}
    \log \left ( I + \Theta \right)= \Theta - \frac{1}{2} \Theta^2 + \frac{1}{3} \Theta^{3} + \hdots,
\end{align*}
which, for a first-order truncation, satisfies
\begin{align}
\log \left ( \Gamma_{data}\Gamma_{des}^{-1} \right) = 
    \log \left ( I + \Theta \right) &\approx \Theta  = \Gamma_{data}\Gamma_{des}^{-1} - I. \label{eq:approx log}
\end{align}

Substituting the approximation \eqref{eq:approx log}, after dropping the identity (additive constant), in \eqref{eq:log det to tr log}, we can use a surrogate version of $\bar d_{KL}$, call it $F$,  as a regularization term. For a fixed $\Gamma_{data}$,
\begin{align}
    F(\Gamma_{des}) = \trace \left ( \Gamma_{data}^{-1}\Gamma_{des} + \Gamma_{data}\Gamma_{des}^{-1}\right),
\end{align}
where the $1/2$ factor in $\bar d_{KL}$ is dropped and will be replaced by a hyperparameter $\gamma>0$. The new version of the cost \eqref{eq:cost Controllability type}, with $F$ as a regularization term and $\gamma>0$ a user-defined hyperparameter, is now designated by
\begin{align}\label{eq:cost KL Div editted}
     J_{F} &= \trace \left ( Q \Sigma  \right )  +\trace \left ( R K \Sigma  K^\top \right ) +  \gamma F.
\end{align}

Next we show that the regularization term $F$, although an approximation of the KL divergence, has favorable properties for our purposes.

\subsection{Properties of $F$}
\add{We show that $\Gamma_{des} = \Gamma_{data}$ is the unique minimizer of the regularization term $F$, or equivalently of $J_{F}$ as $\gamma \to \infty$. That is, even after the approximation \eqref{eq:approx log}, the resulting regularization term $F$ still enforces $\Gamma_{des}$ to be close to $\Gamma_{data}$.\MPar{5}}

We take as given the arithmetic and geometric means inequality, which we state as the following lemma.
\begin{lemma} \label{lemma:lambda + 1/lambda}
Let $\lambda > 0$. Then $\lambda + 1/\lambda \geq 2$, with equality if and only if $\lambda=1$. \hfill \qed
\end{lemma}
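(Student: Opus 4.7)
The statement is the standard two-term arithmetic-geometric mean inequality specialized to the reciprocal pair $(\lambda, 1/\lambda)$, so no deep machinery is needed; the plan is simply to reduce it to a nonnegativity statement about a perfect square.

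My approach would be the following. Since $\lambda>0$, multiplying the inequality $\lambda + 1/\lambda \geq 2$ through by $\lambda$ preserves its direction and produces the equivalent inequality $\lambda^2 - 2\lambda + 1 \geq 0$. The left-hand side factors as $(\lambda-1)^2$, which is nonnegative for every real $\lambda$, establishing the inequality. For the equality case, $(\lambda-1)^2 = 0$ holds if and only if $\lambda = 1$, and tracing back through the multiplication by the strictly positive quantity $\lambda$, this characterizes equality in the original statement as well.

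An equally short alternative I would keep in reserve is to write $\lambda + 1/\lambda - 2 = \bigl(\sqrt{\lambda} - 1/\sqrt{\lambda}\bigr)^{2}$, which is manifestly nonnegative and vanishes exactly when $\sqrt{\lambda} = 1/\sqrt{\lambda}$, i.e., $\lambda=1$. Either form makes the proof a one-liner, so there is no real obstacle; the only thing to be careful about is justifying that the multiplication step (or the square-root substitution) is reversible, which is immediate from $\lambda>0$. Given how elementary the claim is, I would present just the $(\lambda-1)^2$ argument in the paper and omit further commentary.
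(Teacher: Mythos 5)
Your proof is correct. Note that the paper does not actually prove this lemma --- it explicitly ``takes as given'' the two-term arithmetic--geometric mean inequality and closes the statement with a \qed. Your reduction to $(\lambda-1)^2 \geq 0$ after multiplying by the strictly positive $\lambda$ (or equivalently the $\bigl(\sqrt{\lambda}-1/\sqrt{\lambda}\bigr)^2$ identity) is a complete and correct elementary justification of exactly what the paper assumes, including the equality case $\lambda=1$ that Lemma~\ref{lemma:trace of E E inverse} relies on for uniqueness of the minimizer.
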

From this, we conclude the subsequent lemma about positive definite matrices.
\begin{lemma} \label{lemma:trace of E E inverse}
Let $E$ be an $n \times n$ symmetric positive definite matrix. Then $E = \mathbb I$ is the unique minimizer of $\bar F(E)=\trace \left(E + E^{-1} \right)$.
\end{lemma}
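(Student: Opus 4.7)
The plan is to reduce the matrix inequality to a scalar one by diagonalization and then apply Lemma~\ref{lemma:lambda + 1/lambda} eigenvalue-by-eigenvalue. Since $E$ is symmetric positive definite, the spectral theorem lets me write $E = U \Lambda U^\top$, where $U \in \mathbb{R}^{n\times n}$ is orthogonal and $\Lambda = \mathrm{diag}(\lambda_1,\dots,\lambda_n)$ with every $\lambda_i > 0$. Then $E^{-1} = U \Lambda^{-1} U^\top$, and the cyclic property of the trace gives
\begin{align*}
\bar F(E) = \mathrm{tr}\!\left(U \Lambda U^\top + U \Lambda^{-1} U^\top\right) = \mathrm{tr}(\Lambda) + \mathrm{tr}(\Lambda^{-1}) = \sum_{i=1}^{n} \left(\lambda_i + \tfrac{1}{\lambda_i}\right).
\end{align*}

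Next I would apply Lemma~\ref{lemma:lambda + 1/lambda} termwise: for each $i$, $\lambda_i + 1/\lambda_i \ge 2$, with equality iff $\lambda_i = 1$. Summing gives $\bar F(E) \ge 2n$, and equality holds iff $\lambda_i = 1$ for all $i$, i.e.\ iff $\Lambda = I$. In that case $E = U I U^\top = I$ regardless of $U$, so the minimizer is unique. Conversely, $\bar F(I) = \mathrm{tr}(2I) = 2n$, confirming that the minimum value $2n$ is attained there.

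There is no real obstacle here; the main thing to be careful about is recording that the uniqueness conclusion collapses the ambiguity in the orthogonal factor $U$: equality forces $\Lambda = I$, which makes $E$ itself independent of $U$. The proof is short and can be written almost verbatim from the sketch above.
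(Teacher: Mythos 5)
Your proof is correct and follows essentially the same route as the paper: orthogonal/unitary diagonalization of $E$, the cyclic property of the trace to reduce $\bar F(E)$ to $\sum_i (\lambda_i + 1/\lambda_i)$, and a termwise application of Lemma~\ref{lemma:lambda + 1/lambda}, with uniqueness following because all eigenvalues equal to one forces $E=\mathbb I$ regardless of the orthogonal factor. No changes needed.
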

\begin{proof}
Since $E$ is symmetric positive definite, we may write the unitary diagonalization $E = \Psi \Lambda \Psi^T$, where $\Lambda \succ 0$ is diagonal and $\Psi$ is unitary. Therefore, by algebraic identities $\bar F(E)=\trace \left(\Psi \left ( \Lambda + \Lambda^{-1} \right ) \Psi^{T} \right)=\trace \left(\left ( \Lambda + \Lambda^{-1} \right ) \Psi^{T}\Psi  \right)=\trace \left(\Lambda + \Lambda^{-1}   \right)$. That is,
\begin{align*}
    \bar F(E)=\trace \left(\Lambda + \Lambda^{-1} \right) = \sum_i^n \left ( \lambda_i + \frac{1}{\lambda_i} \right ),
\end{align*}
where $\lambda_i>0,\,i=1,\hdots,n,$ are the eigenvalues of $E$. The unique minimum happens when each $\lambda_i = 1$, per Lemma~\ref{lemma:lambda + 1/lambda}. That is, when $E=\Lambda= \mathbb I$ ($\mathbb I$ is the only positive definite matrix of eigenvalues of $1$),  it is the unique minimizer of $\bar F$.
\end{proof}
\begin{theorem} \label{theorem:global minimizer}
    $\Gamma_{des} = \Gamma_{data}$ is the global minimizer of
    \begin{align*}
        F(\Gamma_{des}) = \trace \left ( \Gamma_{data}^{-1}\Gamma_{des} + \Gamma_{data}\Gamma_{des}^{-1}\right).
    \end{align*}
\end{theorem}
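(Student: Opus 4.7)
The plan is to reduce $F(\Gamma_{des})$ to the scalar form already handled by Lemma~\ref{lemma:trace of E E inverse} via a symmetric similarity change of variables. Specifically, I would introduce $E := \Gamma_{data}^{-1/2}\Gamma_{des}\Gamma_{data}^{-1/2}$, where $\Gamma_{data}^{1/2}$ denotes the unique symmetric positive definite square root (which exists since $\Gamma_{data}\succ 0$ by construction, being the empirical covariance of the PE data under Assumption~\ref{assumption:Gaussianity}). Since $\Gamma_{des}\succ 0$ and the congruence $\Gamma_{data}^{-1/2}(\cdot)\Gamma_{data}^{-1/2}$ preserves positive definiteness and symmetry, $E$ is itself a symmetric positive definite matrix, and the map $\Gamma_{des}\mapsto E$ is a bijection on the cone of positive definite matrices.

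Next I would rewrite $F$ in terms of $E$. Using the cyclic property of the trace,
\begin{align*}
\trace(\Gamma_{data}^{-1}\Gamma_{des}) &= \trace\bigl(\Gamma_{data}^{-1/2}\Gamma_{des}\Gamma_{data}^{-1/2}\bigr) = \trace(E),\\
\trace(\Gamma_{data}\Gamma_{des}^{-1}) &= \trace\bigl(\Gamma_{data}^{1/2}\Gamma_{des}^{-1}\Gamma_{data}^{1/2}\bigr) = \trace(E^{-1}),
\end{align*}
so that $F(\Gamma_{des}) = \trace(E + E^{-1}) = \bar F(E)$, where $\bar F$ is the functional treated in Lemma~\ref{lemma:trace of E E inverse}.

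By that lemma, $\bar F$ attains its unique minimum over symmetric positive definite matrices at $E = \mathbb{I}$. Inverting the change of variables, $E=\mathbb{I}$ corresponds to $\Gamma_{des} = \Gamma_{data}^{1/2}\mathbb{I}\Gamma_{data}^{1/2} = \Gamma_{data}$, which yields the claim. I do not anticipate any real obstacle beyond being explicit about the existence and invertibility of the symmetric square root and about the bijectivity of the change of variables on the positive definite cone; the rest is a routine bookkeeping step that inherits both existence and uniqueness from Lemma~\ref{lemma:trace of E E inverse}.
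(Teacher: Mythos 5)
Your proposal is correct and follows essentially the same route as the paper's proof: both define $E = \Gamma_{data}^{-1/2}\Gamma_{des}\Gamma_{data}^{-1/2}$, use the cyclic property of the trace to reduce $F(\Gamma_{des})$ to $\bar F(E) = \trace\left(E + E^{-1}\right)$, and invoke Lemma~\ref{lemma:trace of E E inverse}. Your added care about the symmetric square root and the bijectivity of the congruence is a welcome (if minor) tightening of the paper's argument.
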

\begin{proof}
By the linearity and the cyclic property of the trace, we can rewrite $F$ as follows:
\begin{align*}
    F(\Gamma_{des}) = \trace \left ( \Gamma_{data}^{-\frac{1}{2}}\Gamma_{des}\Gamma_{data}^{-\frac{1}{2}} + \Gamma_{data}^{\frac{1}{2}}\Gamma_{des}^{-1}\Gamma_{data}^{\frac{1}{2}}\right).
\end{align*}
If we let $E=\Gamma_{data}^{-\frac{1}{2}}\Sigma_{des}\Sigma_{data}^{-\frac{1}{2}}$, then we have
\begin{align*}
    \bar F(E) = \trace \left ( E + E^{-1}\right).
\end{align*}
Showing that $\Gamma_{des} = \Gamma_{data}$ is the unique global minimizer of $F$ equates to showing that $E=\mathbb I$ is the unique global minimizer of $\bar F$, which is true by Lemma~\ref{lemma:trace of E E inverse}.
\end{proof}

Next we show that the regularization term $F$ not only has a unique global minimizer that we seek to enforce, that is, $\Gamma_{des} = \Gamma_{data}$, but is also computationally appealing, convex in particular. 

\begin{theorem} \label{theorem:convexity}
    $F$ is convex in $\Gamma_{des}$.
\end{theorem}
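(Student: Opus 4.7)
The plan is to split $F$ into two pieces along the sum inside the trace, namely
\begin{align*}
F(\Gamma_{des}) = \trace\bigl(\Gamma_{data}^{-1}\Gamma_{des}\bigr) + \trace\bigl(\Gamma_{data}\Gamma_{des}^{-1}\bigr),
\end{align*}
and verify convexity of each term separately. The first term is an inner product of $\Gamma_{des}$ with the fixed matrix $\Gamma_{data}^{-1}$, hence affine and therefore convex (in fact linear). All of the work lies in the second term $g(\Gamma_{des}) := \trace(\Gamma_{data}\Gamma_{des}^{-1})$.

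For $g$, the strategy is to reduce to the well-known scalar building block $X \mapsto v^\top X^{-1} v$, which is convex on the cone of positive definite matrices. First I would use the symmetric square root of $\Gamma_{data}\succ 0$ and the cyclic property of the trace to rewrite
\begin{align*}
g(\Gamma_{des}) = \trace\bigl(\Gamma_{data}^{1/2}\,\Gamma_{des}^{-1}\,\Gamma_{data}^{1/2}\bigr) = \sum_{i=1}^{r_x+r_u} v_i^\top \Gamma_{des}^{-1} v_i,
\end{align*}
where $v_i$ denotes the $i$-th column of $\Gamma_{data}^{1/2}$. A sum of convex functions is convex, so it suffices to handle a single summand.

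To show that $\Gamma_{des} \mapsto v^\top \Gamma_{des}^{-1} v$ is convex on the positive definite cone, I would invoke the Schur complement characterization: for any scalar $t$,
\begin{align*}
v^\top \Gamma_{des}^{-1} v \le t \iff
\begin{bmatrix}
\Gamma_{des} & v \\ v^\top & t
\end{bmatrix} \succeq 0,
\end{align*}
so every sublevel set is the intersection of the positive definite cone with an LMI-defined set, hence convex; a function with convex sublevel sets over a convex domain whose epigraph is itself defined by an LMI is convex. (Equivalently, one can differentiate twice along a line $\Gamma_{des}(t) = \Gamma_0 + tH$ and obtain $\tfrac{d^2}{dt^2}v^\top \Gamma_{des}^{-1}v = 2\,\|\Gamma_{des}^{-1/2} H \Gamma_{des}^{-1} v\|^2 \ge 0$.) Combining the convexity of each summand with the affinity of the first term yields the convexity of $F$.

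The proof is essentially routine once the reduction to $v^\top X^{-1} v$ is made; the only mild obstacle is choosing a clean reduction so that we do not have to appeal to the heavier machinery of operator convexity of $X\mapsto X^{-1}$, although that would also furnish an immediate proof.
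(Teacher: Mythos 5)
Your proposal is correct and follows essentially the same route as the paper: split $F$ into the linear first term and the term $\trace(\Gamma_{data}\Gamma_{des}^{-1})$, symmetrize with $\Gamma_{data}^{1/2}$ and the cyclic property of the trace, and reduce to the convexity of the matrix fractional function $\Gamma_{des}\mapsto v^\top\Gamma_{des}^{-1}v$ on the positive definite cone. The only difference is that the paper simply cites Boyd and Vandenberghe for that last building block, whereas you supply the Schur-complement (epigraph) argument explicitly.
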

\begin{proof}
    The domain---the set of all positive definite matrices---is convex. The first term of $F(\Gamma_{des})$ is linear in $\Gamma_{des}$, while the second term
    \begin{align*}
        \trace \left (\Gamma_{data}\Gamma_{des}^{-1}\right) &= \trace \left (\Gamma_{data}^{\frac{1}{2}}\Gamma_{des}^{-1}\Gamma_{data}^{\frac{1}{2}}\right)= \sum_i \xi_i ^T \Gamma_{des}^{-1} \xi_i,
    \end{align*}
    where $\xi_i$ is the $i$th column of $\Gamma_{data}^{\frac{1}{2}}$.
    Each entry in the sum is convex in $\Gamma_{des}$, as shown in \cite[p.~76]{boyd2004convex}.
\end{proof}

Notice that $\frac{1}{2}F$ is the Jeffreys divergence, a discrepancy metric between distributions \cite{pardo2018statistical} (Gaussians in this case). The above derivations highlight its relation to the KL divergence, which is more familiar in the exploration vs exploitation context.

The above results are concerned with the regularization term as a function of $\Gamma_{des}$. We show next that, under some relaxation, $F$ can be related back to the original decision variables $\Sigma$ and $L$ ($L=K \Sigma$) affinely in the cost, together with LMI constraints.

\subsection{Representing $F$ in terms of $\Sigma$ and $L$}
The first term of $F$, $\trace \left(\Gamma_{data}^{-1}\Gamma_{des} \right)$, can be upper-bounded by $\trace \left(\Gamma_{data}^{-1}Z_1 \right)$, where $Z_1 \succeq \Gamma_{des}$, or equivalently, committing to the change of variables $L=K \Sigma$,
\begin{align*} 
    Z_1 \succeq \Gamma_{des} &= \begin{bmatrix}
        \Sigma & \Sigma K^\top \\
        K\Sigma & K\Sigma K^\top + V
    \end{bmatrix},\\
    &=
    \begin{bmatrix}
        \Sigma\\
        L
    \end{bmatrix} \Sigma^{-1}
    \begin{bmatrix}
        \Sigma & L^\top
    \end{bmatrix}+
    \mathcal{V},
\end{align*}
which can be equivalently described by the LMI
\begin{align}\label{eq:Z_1 LMI}
\begin{bmatrix}
        Z_1 - 
        \mathcal{V}
    & \begin{bmatrix}
        \Sigma\\
        L
    \end{bmatrix}\\
    \begin{bmatrix}
        \Sigma & L^\top
    \end{bmatrix} & \Sigma
\end{bmatrix} \succeq 0,
\end{align}
where $\mathcal{V}= \text{block-diag}(0_{r_x \times r_x},V)$.

The second term is more involved since it contains the inverse of $\Gamma_{des}$. Applying the inverse of a partitioned matrix \cite{horn2012matrix}, we have
\begin{align*}
\Gamma_{des}^{-1} = 
\begin{bmatrix}
        \Sigma^{-1}+K^\top V^{-1} K & -K^\top V^{-1} \\
        -V^{-1} K & V^{-1}
\end{bmatrix},
\end{align*}
which exists and is positive definite, since $\Gamma_{des} \succ 0$. Hence, the second term of $F$
\begin{align} \label{eq:second term of F}
   \trace \left (\Gamma_{data} \Gamma_{des}^{-1} \right) &= \trace \left (\Sigma_{data} \Sigma^{-1} + \Sigma_{data} K^\top V^{-1} K \right ) +\nonumber\\
   &\hskip-10mm \trace \left ( M_{data} V^{-1} -2H^\top_{data} K^\top V^{-1}\right ),\nonumber\\
   &= \trace \left (\Sigma_{data} \Sigma^{-1} \right ) + \trace \left (V^{-1} K \Sigma_{data} K^\top \right ) +\nonumber\\
   &\hskip-10mm \trace \left ( M_{data} V^{-1} \right ) + \trace \left (-2K^\top V^{-1}H^\top_{data}  \right ).
\end{align}

\begin{lemma} \label{lemma: completing the squares}
 We can rewrite the following term as
 \begin{equation}
    \begin{aligned}\label{eq: completing the squares}
        &\trace \left (\Sigma_{data} K^\top V^{-1} K -2K^\top V^{-1}H^\top_{data}\right ) = \\
        &\trace \Big(V^{-1}\left[K \Sigma- H_{data}^\top \Sigma_{data}^{-1} \Sigma\right] \Sigma^{-1} \Sigma_{data} \Sigma^{-1} \times \\
        &\hskip -10mm\left[K \Sigma- H_{data}^\top \Sigma_{data}^{-1} \Sigma\right]^\top - V^{-1}H_{data}^\top \Sigma_{data}^{-1} H_{data}\Big).
    \end{aligned}
    \end{equation}
\end{lemma}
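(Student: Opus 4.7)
The plan is to verify the identity by direct computation, recognizing the right-hand side as the standard completion-of-the-square in $K$ with left-weight $V^{-1}$ and right-weight $\Sigma^{-1}\Sigma_{data}\Sigma^{-1}$. The compensating term $-V^{-1}H_{data}^\top\Sigma_{data}^{-1}H_{data}$ is precisely what is needed to eliminate the pure-$H_{data}$ quadratic produced by expanding the square, leaving only the quadratic in $K$ and the cross term in $K,H_{data}$ that appear on the left.

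First I would factor $\Sigma$ out of the inner bracket. Since
\[
K\Sigma - H_{data}^\top\Sigma_{data}^{-1}\Sigma = (K - H_{data}^\top\Sigma_{data}^{-1})\Sigma,
\]
and $\Sigma = \Sigma^\top$, the sandwich on the right-hand side of \eqref{eq: completing the squares} collapses through $\Sigma\cdot\Sigma^{-1}\Sigma_{data}\Sigma^{-1}\cdot\Sigma = \Sigma_{data}$ to
\[
(K - H_{data}^\top\Sigma_{data}^{-1})\,\Sigma_{data}\,(K - H_{data}^\top\Sigma_{data}^{-1})^\top.
\]
Using $\Sigma_{data}\Sigma_{data}^{-1} = I$ and $\Sigma_{data}^\top = \Sigma_{data}$, expanding yields
\[
K\Sigma_{data}K^\top - K H_{data} - H_{data}^\top K^\top + H_{data}^\top\Sigma_{data}^{-1}H_{data}.
\]

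Next I would premultiply by $V^{-1}$, apply the trace, and add the bracketed $-V^{-1}H_{data}^\top\Sigma_{data}^{-1}H_{data}$ term; this cancels the last summand. Using the cyclic property of the trace together with $V^{-1}=V^{-\top}$, we have
\[
\trace(V^{-1}KH_{data}) = \trace\bigl((V^{-1}KH_{data})^\top\bigr) = \trace(K^\top V^{-1}H_{data}^\top),
\]
so the two cross terms combine into $-2\,\trace(K^\top V^{-1}H_{data}^\top)$. Cycling the quadratic term likewise gives $\trace(V^{-1}K\Sigma_{data}K^\top) = \trace(\Sigma_{data}K^\top V^{-1}K)$, matching the left-hand side exactly.

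No real obstacle is expected; the only care required is bookkeeping of symmetries (namely $\Sigma$, $\Sigma_{data}$, and $V$ are all symmetric as covariance matrices) to justify the transpositions and trace-cycles. The structure of the identity is the expected completion-of-the-square, with the ``minimizing'' gain $K = H_{data}^\top\Sigma_{data}^{-1}$ emerging as the empirical conditional-mean coefficient of input given state, which is why the compensating additive term is precisely the corresponding residual variance $H_{data}^\top\Sigma_{data}^{-1}H_{data}$ (weighted by $V^{-1}$).
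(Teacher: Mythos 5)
Your proof is correct and follows exactly the route the paper intends: its proof is the single sentence ``using the linearity and the cyclic property of the trace, we reduce the right-hand side to the left one,'' and your expansion --- factoring $\Sigma$ out of the bracket, collapsing $\Sigma\,\Sigma^{-1}\Sigma_{data}\Sigma^{-1}\Sigma=\Sigma_{data}$, cancelling the residual $V^{-1}H_{data}^\top\Sigma_{data}^{-1}H_{data}$ term, and combining the cross terms via trace cyclicity and symmetry of $V$, $\Sigma$, $\Sigma_{data}$ --- is precisely that computation carried out in full.
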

\begin{proof}
Using the linearity and the cyclic property of the trace, we reduce the right-hand side to the left one.
\end{proof}

After dropping the additive constants in $\Sigma$ and $K$, from \eqref{eq:second term of F} and \eqref{eq: completing the squares}, and toward forming LMIs, we relax the right-hand side of \eqref{eq: completing the squares} by assuming $\Sigma \approx \Sigma_{data}$; thus, $\Sigma^{-1} \Sigma_{data} \Sigma^{-1} \approx \Sigma ^ {-1}$. We use the extra variable $Z_2$ such that
\begin{align*}
    Z_2 \succeq \left[K \Sigma- H_{data}^\top \Sigma_{data}^{-1} \Sigma\right] \Sigma^{-1} \left[K \Sigma- H_{data}^\top \Sigma_{data}^{-1} \Sigma\right]^\top,
\end{align*}
which, equivalently, as an LMI, is
\begin{align} \label{eq:Z_2 LMI}
    \begin{bmatrix}
        Z_2 & L- H_{data}^\top \Sigma_{data}^{-1} \Sigma \\ 
        \left[L- H_{data}^\top \Sigma_{data}^{-1} \Sigma\right]^\top & \Sigma
    \end{bmatrix} \succeq 0.
\end{align}

We show empirically in Section~\ref{section: Numerical} that, even after the above relaxation, the consistency with state-input joint distribution is still effectively enforced.

The term $\trace \left (\Sigma_{data} \Sigma^{-1} \right )$ can also be described by an LMI by using an extra variable $Z_3 \succeq \Sigma^{-1}$. Hence,
\begin{align} \label{eq:Z_3 LMI}
\begin{bmatrix}
        Z_3 & I \\
    I & \Sigma
\end{bmatrix} \succeq 0.
\end{align}

\subsection{The joint state-input data-conforming data-driven LQR}

Toward a state-input data-conforming LQR control, we adjust Problem~\ref{prob:CE-LQR} by including the regularization term $F$ and the accompanying extra variables and LMIs defined in \eqref{eq:Z_1 LMI}, \eqref{eq:Z_2 LMI}, and \eqref{eq:Z_3 LMI}.

\begin{problem} \label{prob:State-Input Data-Conforming-CE-LQR}
Data-conforming (jointly in the state-input) data-driven LQR:
\begin{align*}
&\min_{\Sigma ,\,L,\,Z_0,\,Z_1,\,Z_2,\,Z_3} \trace \left (  Q\Sigma  \right ) + \trace \left ( R Z_0 \right) + \nonumber\\
& \hskip 5mm\gamma \Big \{ \trace \left ( \Gamma_{data}^{-1} Z_1 \right) + \trace \left ( V^{-1} Z_2 \right) + \trace \left ( \Sigma_{data} Z_3 \right) \Big \} \nonumber \\
        &\text{s.t. } \Sigma \succ 0, \quad
        \begin{bmatrix}
            Z_0 & L\\
            L^\top & \Sigma 
        \end{bmatrix}\succeq 0,\\
        &\begin{bmatrix}
        \Sigma-\widehat W-\widehat B V \widehat B^\top & \widehat A \Sigma + \widehat B L \\
        \Sigma \widehat A^\top + L^\top \widehat B^\top & \Sigma
    \end{bmatrix}\succeq 0,\\
        &\begin{bmatrix}
        Z_1 - 
        \mathcal{V}
    & \begin{bmatrix}
        \Sigma\\
        L
    \end{bmatrix}\\
    \begin{bmatrix}
        \Sigma & L^\top
    \end{bmatrix} & \Sigma
\end{bmatrix} \succeq 0, \quad
\begin{bmatrix}
        Z_3 & I \\
    I & \Sigma
\end{bmatrix} \succeq 0,\\
&\begin{bmatrix}
        Z_2 & L- H_{data}^\top \Sigma_{data}^{-1} \Sigma \\ 
        \left[L- H_{data}^\top \Sigma_{data}^{-1} \Sigma\right]^\top & \Sigma
    \end{bmatrix} \succeq 0,\\
        & \widehat A, \widehat B, \widehat W \text{ as in \eqref{eq:linear state-space ID}}.
\end{align*}
The optimal control is recovered from the optimal values $\Sigma_\star $ and $L_\star$ through $K_\star = L_\star \Sigma_\star ^{-1}$.
\end{problem}

\begin{remark} \label{remark: no zero mean condition}
\add{If the zero mean condition is to be relaxed in Assumption~\ref{assumption:Gaussianity}, the corresponding term with $\mu_{des},\mu_{data} \neq 0$ in \eqref{eq:KL divergence of Gaussians}, that is, the term
\begin{equation*}
    (\mu_{des} - \mu_{data})^\top \Gamma_{data}^{-1}(\mu_{des} - \mu_{data}),
\end{equation*}
is quadratic in $\mu_{des} = \left [\bar x^\top,\, \bar u^\top \right ]^\top$ ($\mu_{data}$ is a constant). Furthermore, the original cost \eqref{eq:costFunction} is also quadratic in $\bar x$ and $\bar u$, since $x_k$ and $u_k$ can be described by $x_k = \bar x + \tilde x_k$ and $u_k = \bar u + \tilde u_k$, where $\tilde x_k, \tilde u_k$ have zero means, and $\tilde u_k = K \tilde x_k + v_k$. Writing the cost as
\begin{align*}
    J = \lim_{k \to \infty} \E \left \{(\bar x + \tilde x_k)^\top Q (\bar x + \tilde x_k) + (\bar u + \tilde u_k)^\top R (\bar u + \tilde u_k) \right \},
\end{align*}
the cross terms, $\bar u^\top R \tilde u_k$ and $\bar x^\top Q \tilde x_k$ vanish, since $\bar x, \bar u$ are constants and $\tilde x_k, \tilde u_k, v_k$ have zero means. This splits the cost into two parts, the first
\begin{align*}
    \lim_{k \to \infty} \E \left \{\tilde x_k^\top Q  \tilde x_k + \tilde u_k^\top R \tilde u_k \right \},
\end{align*}
which results again in Problem~\ref{prob:State-Input Data-Conforming-CE-LQR}, and a deterministic second part that can be written as
\begin{equation*}
    \mu_{des}^\top 
    \begin{bmatrix}
        Q & \\
        & R
    \end{bmatrix}
    \mu_{des},
\end{equation*}
which is also quadratic in $\mu_{des}$. One can therefore solve Problem~\ref{prob:State-Input Data-Conforming-CE-LQR} for $\Sigma_\star$ (covariance scaling and rotation) then fix $K=K_\star$ and solve for $\mu_{des}$ in
\begin{align*}
    &\min_{\mu_{des}} \mu_{des}^\top 
    \begin{bmatrix}
        Q & \\
        & R
    \end{bmatrix}
    \mu_{des} + \\
    &\hskip15mm \gamma \cdot (\mu_{des} - \mu_{data})^\top \Gamma_{data}^{-1}(\mu_{des} - \mu_{data})\\
    &\text{s.t. } 
    \begin{bmatrix}
        I & 0
    \end{bmatrix}
    \mu_{des} = 
    \begin{bmatrix}
        \widehat A + \widehat B K_\star & \widehat B
    \end{bmatrix}
     \mu_{des},
\end{align*}
(density translation) which is a quadratic program with linear constraints. The two optimization problems can be solved iteratively to induce cautious translation and transformation of the state-input density.

Note\MPar{11a} that, similar to the transformation of the covariance, the translation of the mean value between operation points also implies exploring new regions of the state-input space. Our data-conforming framework limits the aggressiveness of this exploration and gives the control practitioner the time and the tools to react if harmful nonlinearities are being slowly discovered along the transition path.}
\end{remark}

\begin{remark}\label{remark:general distributions}
    \add{The\MPar{13} proposed problems and the extension in Remark~\ref{remark: no zero mean condition} can be sufficient when the linear model approximations are good enough and all distributions are characterized by their first two moments. In the general case when Assumption~\ref{assumption:Gaussianity} can no longer be sufficiently valid--the nonlinearity of the underlying dynamics and the non-Gaussianity of the distributions have to be explicitly addressed--our data-conforming framework results in a nonlinear system identification followed by a non-convex optimization problem. The modeling flexibility attained from dropping Assumption~\ref{assumption:Gaussianity} will be paid for by the increase in the computational complexity of the resulting problem.}
\end{remark}

\begin{remark} \label{remark: the ball and convex hull robustness}
As mentioned previously, the identification step, inside  Problems~\ref{prob:CE-LQR}, \ref{prob:CE-LQR-state-conforming-hard}, \ref{prob:CE-LQR-state-conforming-regularization}, and \ref{prob:State-Input Data-Conforming-CE-LQR}, resembles a naive certainty equivalence approach \cite{dean2020sample}. One can enhance the robustness of these problems by incorporating a convex hull or a ball, according to some systems' norm, centered around $(\widehat A, \widehat B)$, then search for a controller that stabilizes them all, while minimizing the cost and satisfying other constraints. This adjustment, say, to Problem~\ref{prob:State-Input Data-Conforming-CE-LQR}, can be done through more LMI constraints \cite{boyd1994linear} in the same decision variables. One also can, again, through LMI constraints,  enforce robustness under state measurement noise (state is measured through $\zeta_k$, where $\zeta_k = x_{k} + \eta_k$, for some white noise $\eta_k$) as in \cite{de2019formulas} for when the covariance of the state evolution disturbance $W \neq 0$, as in \cite{van2021matrix}.
\end{remark}

\begin{remark} \label{remark: transforming convex hull}
    The robustness results referred to in Remark~\ref{remark: the ball and convex hull robustness} hold when the underlying true system \eqref{eq:stateEquation} is linear. This motivates a new interpretation of our data-conforming framework.

    Suppose data has been collected,  a system has been identified, and a convex hull or a ball $\mathbb B$ has been constructed around this system's estimate. When the new robust (e.g., according to \cite{boyd1994linear}) controller is applied, new regions of the joint state-input space might be visited, due to the distributional shift, and new modes of nonlinearity activated. If the new system is to be identified and a new convex hull or ball to be constructed, they might be different from $\mathbb B$ used in the design process. 
    
    In other words, because of the nonlinearity of the underlying model in our case, the application of the new control law equates to the transformation of $\mathbb B$. Hence, this invalidates the premise for the quadratic stability condition, that is, that the system stays in $\mathbb B$.

    We\MPar{10} expand Problems~\ref{prob:CE-LQR-state-conforming-regularization} and \ref{prob:State-Input Data-Conforming-CE-LQR} in \cite{ramadan2024dampening} to incorporate the quadratic stability condition and show how it helps in ``dampening'' this transformation in $\mathbb B$. Therefore, improving the validity of robust and gain-scheduling design approaches \cite{boyd1994linear,van2021matrix} via solidifying the premise of invariance in $\mathbb B$ after the control application.
\end{remark}

\begin{remark}
\add{Note\MPar{2} that the proposed problems are mainly directed at the regulation problem. The tracking problem\footnote{Tracking problems or state-input-output constrained problems are typically easier to pose in a predictive (receding-horizon) fashion. An interested reader can consult \cite{ramadan2025floodgates} for predictive versions of the data-conforming approach.}, on the other hand, is more involved; since not only the control design can induce a distributional shift, but the reference trajectory to be tracked can pass through regions of the state-input space where no previous data has been collected nor a theoretical model has been derived. This can compromise safety/stability if there is a gap between the extrapolated model estimate and the true system anywhere along the reference trajectory. Therefore, non-aggressive and slowly changing reference trajectories are generally advised in the early stages of learning and data collection.}
\end{remark}

\section{On the optimality conditions of the proposed problems}\label{section:optimality conditions}
In this section we investigate the relaxations adopted in the construction of Problems~\ref{prob:CE-LQR-state-conforming-regularization} and \ref{prob:State-Input Data-Conforming-CE-LQR}, and show that the intended purpose of damping the distributional shifts is still retained after these relaxations.

Notice that Problems~\ref{prob:CE-LQR}, \ref{prob:CE-LQR-state-conforming-regularization} and \ref{prob:State-Input Data-Conforming-CE-LQR}, before the change of variables and before applying the equivalent LMI formulations, can be put in the following form (we use $A,B,W$ instead of $\widehat A,\widehat B,\widehat W$ for ease of notation)
\begin{align}
        &\min_{\Sigma ,\,K} \trace \left (  Q\Sigma  \right ) + \trace \left (  R K \Sigma K^\top  \right ) + \Xi\left (\Sigma, K \right)\nonumber\\
        &\text{s.t. }\nonumber\\
        &\Sigma \succeq \left [ A+BK\right ] \Sigma \left [ A+BK\right ]^\top + W + BV B^\top, \label{eq:Lyapunov inequality constraint}
\end{align}
where $\Xi\left (\Sigma, K \right)=0$ in Problem~\ref{prob:CE-LQR}, $$\Xi \left(\Sigma, K \right)=\gamma' \trace \left ( \left[\Sigma  - \Sigma_{data} \right]\left[\Sigma  - \Sigma_{data} \right]^\top \right)$$ in Problem~\ref{prob:CE-LQR-state-conforming-regularization}, and
\begin{align*}
    &\Xi\left (\Sigma, K \right) =\gamma \Big ( \trace \left (\Sigma_{data} \Sigma^{-1} \right ) + \trace \left (V^{-1} K \Sigma_{data} K^\top \right ) \nonumber\\
   &\hskip10mm + \trace \left (-2K^\top V^{-1}H^\top_{data}  \right ) \\
   &+ \trace \left( \mathcal{A} \Sigma + \mathcal{B}K \Sigma \right) + \trace \left( \mathcal{B}^\top \Sigma K^\top + \mathcal{C}K \Sigma K^\top \right) \Big )
\end{align*}
(up to an additive constant in $K,\Sigma$) in Problem~\ref{prob:State-Input Data-Conforming-CE-LQR}, where the matrices in calligraphic font denote the block matrices of the appropriate size of $\Gamma_{data}^{-1}$, that is,
\begin{align*}
\Gamma_{data}^{-1} = 
\begin{bmatrix}
    \mathcal{A} & \mathcal{B}\\
    \mathcal{B}^\top & \mathcal{C}
\end{bmatrix}.
\end{align*}

In the above problems, the Langrangian is given by
\begin{align*}
    &\mathcal{L}(\Sigma, K, \Upsilon) = \trace \left (  Q\Sigma  \right ) + \trace \left ( R K \Sigma K^\top \right) + \Xi\left (\Sigma, K \right)+\\
        &\trace \Bigg ( \Upsilon \Big [\left [ A+BK\right ] \Sigma \left [ A+BK\right ]^\top- \Sigma   +W +BV B^\top \Big] \Bigg),
\end{align*}
where $\Upsilon$ is the dual $r_x \times r_x$ matrix variable corresponding to the Lyapunov inequality constraint. From the dual feasibility and the zero gradient Karush-Kuhn-Tucker (KKT) \cite{boyd2004convex} optimality conditions (derivatives of traces of matrices \cite{petersen2008matrix}), the optimal solution $(K_\star,\Sigma_\star, \Upsilon_\star)$ has to satisfy \cite{helmberg2002semidefinite}:
\begin{align*}
    &\Upsilon_\star \succeq 0,\\
    &\frac{\partial}{\partial \Sigma}:\, Q + K_\star^\top R K_\star +\left[ A+BK_\star\right ]^\top \Upsilon_\star \left [ A+BK_\star\right ] - \Upsilon_\star\\
    &\hskip 20mm + \frac{\partial}{\partial \Sigma} \Xi\left (\Sigma_\star, K_\star \right) = 0.
\end{align*}

Notice that in the standard LQR formulation, Problem~\ref{prob:CE-LQR}, when $\Xi\left (\Sigma, K \right) = 0$, $\Upsilon_\star$ is nothing but the observability-type Gramian $P$ in \eqref{eq:Lyap observability P}, that is, $\Upsilon_\star=P \succ 0$. The positive definiteness of $\Upsilon_\star$ then implies that in the complementary slackness condition,
\begin{align*}
    \Upsilon_\star \Big [\left [ A+BK_\star\right ] \Sigma_\star \left [ A+BK_\star\right ]^\top- \Sigma_\star   +W +BV B^\top \Big] = 0,
\end{align*}
the constraint has to be active to satisfy the above condition, that is, $(\Sigma_\star, K_\star)$ satisfy equation \eqref{eq:Controllability Lyapunov},
\begin{align*}
    \left [ A+BK_\star\right ] \Sigma_\star \left [ A+BK_\star\right ]^\top- \Sigma_\star +W +BV B^\top = 0,
\end{align*}
or in other words, $\Sigma_\star$ is exactly the steady-state covariance of the system state we expect to see in future data when applying the law $u_k = K_\star x_k + v_k$.

This, however, is not generally true for Problems~\ref{prob:CE-LQR-state-conforming-regularization} and \ref{prob:State-Input Data-Conforming-CE-LQR}, where the Lyapunov inequality may be inactive in certain situations. The zero gradient condition for these two problems, respectively,
\begin{align*}
&Q + K_\star^\top R K_\star + \left [ A+BK_\star\right ]^\top \Upsilon_\star \left [ A+BK_\star\right ] - \Upsilon_\star\\
&\hskip20mm + 2\gamma' \left[  \Sigma_\star - \Sigma_{data} \right]= 0,
\end{align*}
and 
\begin{align*}
    & Q + \gamma\mathcal A + K_\star^\top \left [ R+\gamma\mathcal C \right] K_\star + \gamma K_\star^\top \mathcal B^\top + \gamma\mathcal B K_\star \\
    &\hskip 0mm - \gamma\Sigma_\star^{-1} \Sigma_{data} \Sigma_\star^{-1} + \left [ A+BK_\star\right ]^\top \Upsilon_\star \left [ A+BK_\star\right ] - \Upsilon_\star = 0.
\end{align*}
Using the complementary slackness condition again, to guarantee the satisfaction of the equality \eqref{eq:Controllability Lyapunov} in each problem, the terms
\begin{align}
    Q + K_\star^\top R K_\star + 2\gamma' \left[  \Sigma_\star - \Sigma_{data} \right], \label{eq:term 1}
\end{align}
and
\begin{equation}
\begin{aligned}
    Q + \gamma\mathcal A + K_\star^\top \left [ R+\gamma\mathcal C \right] K_\star + \gamma K_\star^\top \mathcal B^\top + \gamma\mathcal B K_\star \\ - \gamma\Sigma_\star^{-1} \Sigma_{data} \Sigma_\star^{-1}, \label{eq:term 2}
\end{aligned}
\end{equation}
have to be positive definite, which might not be the case if $\Sigma_{data}$ is significantly larger than $\Sigma_\star$. However, we show next that even if the terms \eqref{eq:term 1} and \eqref{eq:term 2} are not positive definite, it is still beneficial to have the inequality \eqref{eq:Lyapunov inequality constraint} inactive.

By the contrapositive argument, if the inequality constraint \eqref{eq:Lyapunov inequality constraint} is inactive, the terms \eqref{eq:term 1} and \eqref{eq:term 2} are either indefinite, or negative semi-definite. This can be implied when some eigenvalues of $\Sigma_{data}$ are significantly larger than some of $\Sigma_\star$'s. In such case, two important observations can be made: (i) when $\Sigma_{data}$ has larger eigenvalues than $\Sigma_{\star}$, the data show that aggressive exploration has been done along the direction of the corresponding eigenvectors, while less aggressive (conservative) control design is achieved across the same directions, (ii) suppose $\Sigma_{actual}$ is the solution of the Lyapunov equation \eqref{eq:Controllability Lyapunov} when $K=K_\star$, then $\Sigma_{actual}\preceq \Sigma_\star$, that is, the actual achieved design is more conservative in its exploration tendencies than the intended design. The inequality $\Sigma_{actual} \preceq \Sigma_\star$ holds because when the Lyapunov inequality \eqref{eq:Lyapunov inequality constraint} is inactive, it is implied that the optimal solution $\Sigma_\star$ to the SDP is not the fixed point solution of the Lyapunov equation \eqref{eq:Controllability Lyapunov}, hence, $\Sigma_\star$ is not the steady-state covariance, but rather an upper bound. That is, $\Sigma_{actual}$ is the smallest covariance that satisfies the Lyapunov inequality \eqref{eq:Lyapunov inequality constraint}. This can also be seen when writing the optimality conditions of the SDP with the cost $\trace \left ( \Sigma \right)$ and subject to the Lyapunov inequality \eqref{eq:Lyapunov inequality constraint}.

The above observations show that the case of inactive constraints corresponds to a more conservative achieved control. We show empirically in the next section that the Lyapunov inequality constraint \eqref{eq:Lyapunov inequality constraint} is mostly active in our simulations, and that our approaches still yield favorable results for nonlinear systems, over the certainty equivalence LQR case, even when the Lyapunov constraint is inactive.

\section{Numerical simulations} \label{section: Numerical}
The scalability of our proposed data-conforming formulations is self-evident; as convex SDPs with affine costs and LMI constraints, they can be scaled to handle problems with high state-input dimensions. Instead, we choose to work with a simple, yet telling problem, crafted to further clarify and illustrate our data-conforming paradigm.

Suppose the data-generating system \eqref{eq:stateEquation} is of the form\footnote{The results of this section can be reproduced using our open-source \textsc{Julia} code found at \href{https://github.com/msramada/data-conforming-control}{https://github.com/msramada/data-conforming-control}.} 
\begin{align} \label{eq:example1}
    \begin{pmatrix}
       x_{1,k+1}\\
       x_{2,k+1}
    \end{pmatrix}
    =
    x_{k+1} = 
    \begin{bmatrix}
        .98& .1\\
        0& .95
    \end{bmatrix}x_k + 
    \begin{bmatrix}
        0\\
        .1
    \end{bmatrix}u_k +
    w_k,
\end{align}
with the noise $w_k$ having a covariance $W = \text{diag}(0.4,\,0.1)$. \add{This\MPar{17} value of $W$ is used in simulation and data collection phases, but considered unknown in the control design phase and is to be estimated as in \eqref{eq:linear state-space ID}.} Toward solving Problems~\ref{prob:CE-LQR} and \ref{prob:CE-LQR-state-conforming-regularization}, an experimental simulation was conducted using $N=500$ collected state/input samples\MPar{3c}\footnote{\add{The generalized variances of the data covariance estimates have variances of $\mathcal{O}(N^{-1})$ (see the discussion following \eqref{eq:Sigma_data}).}}. The control $u_k = \kappa_0(x_k) + v_k$, where $\kappa_0(x_k)=K_0 x_k=\left [ -0.2,\,-9.0\right ] x_k$ is stabilizing, and $v_k$ is a white noise of zero mean and covariance $V = 0.5$ (meeting the PE condition, Assumption~\ref{assumption:PE}).

Using these simulation samples to form the data matrices \eqref{eq:data matrix}, we then solve Problems~\ref{prob:CE-LQR} and \ref{prob:CE-LQR-state-conforming-regularization} for $\gamma'=20,100,1000$. Each control gain resulting from these problems is used in a control law of the form $u_k = K x_k + v_k$ and the resulting closed-loop state distribution under each controller is shown in Figure~\ref{fig:Prob5Example}. Notice that as $\gamma'$ increases, the state design distribution converges to the state data distribution because data conformity is more heavily weighted in the control design.

Notice also in Figure~\ref{fig:Prob5Example} (a), that the (empirical) support of the state distribution of the new designed closed-loop system goes beyond that of the data. Suppose, for instance, that these not well-explored regions contain issues such as nonlinearities, discontinuities, or some safety or design condition violations. These issues, if not known and modeled by the operator in the control design, are also not discovered by the initial experiment. Hence, a non-data-conforming control, as the one in Figure~\ref{fig:Prob5Example} (a), can lead to activating these issues due to its inherent generalization beyond data. On the other hand, in (b), (c) and (d), the new controller conforms to the learning data to different levels dictated by the values of $\gamma'$ in each.

\begin{figure}
\centering 
\includegraphics[width=3.2in,height=3.6in]{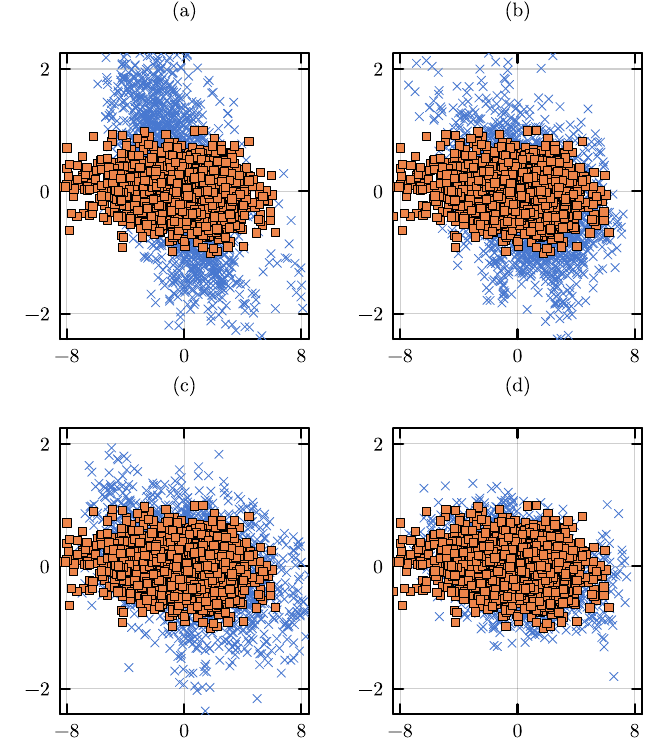} 
\caption{The x-axis and y-axis of each figure correspond to the state space, $x_{1,k}$ and $x_{2,k}$, respectively. The orange squares are the states resulting from using $K_0$ in feedback. The blue x-crosses are the states resulting from the control gains obtained by solving Problem~\ref{prob:CE-LQR} (or, equivalently, Problem~\ref{prob:CE-LQR-state-conforming-regularization} with $\gamma'=0$) in (a), and Problem~\ref{prob:CE-LQR-state-conforming-regularization} with $\gamma'=20$ in (b), $\gamma'=100$ in (c), and $\gamma'=1000$ in (d). Each empirical distribution is represented by $2000$ particles.}\label{fig:Prob5Example}
\end{figure}

To better illustrate the effect of unknown/unmodeled nonlinearities, we adjust the dynamics \eqref{eq:example1} such that it is now
\begin{align} \label{eq:example2}
    x_{k+1} = 
    \begin{bmatrix}
        .98& .1\\
        0& .95
    \end{bmatrix}x_k + 
    \begin{pmatrix}
        \theta x_{2,k}^2\\
        0
    \end{pmatrix}
    +
    \begin{bmatrix}
        0\\
        .1
    \end{bmatrix}u_k +
    w_k,
\end{align}
where $\theta=1/9$. Notice now that the nonlinearity $\theta x_{2,k}^2$ is relatively larger in the not well-explored regions. Naively applying Problem~\ref{prob:CE-LQR} can lead to  unexpected behavior and possibly to instability.

Again, we simulate the new nonlinear system with an equivalent control $u_k = K_0 x_k + v_k$ as in the previous case and for $N=500$ time steps. Then, at the last time step, we evaluate the new control laws, using the recorded $500$ data points, and apply each control law immediately in feedback for a new $2000$ time steps. Instances of the results of each simulation are illustrated in Figure~\ref{fig:Prob5Example_nonlinear}. The instability resulting from relying on Problem~\ref{prob:CE-LQR} (or, equivalently, Problem~\ref{prob:CE-LQR-state-conforming-regularization} with $\gamma'=0$) is shown in Figure~\ref{fig:Prob5Example_nonlinear}: (1). In contrast, Figure~\ref{fig:Prob5Example_nonlinear}: (2), (3), and (4) correspond to Problem~\ref{prob:CE-LQR-state-conforming-regularization} with $\gamma'=20,\,100,\,1000$, respectively. The latter three cases enforce conforming to data and hence avoid unexplored regions, in particular those where the nonlinearity is dominant and dangerous.

\begin{figure}
\centering 
\includegraphics[width=3.2in,height=3.2in]{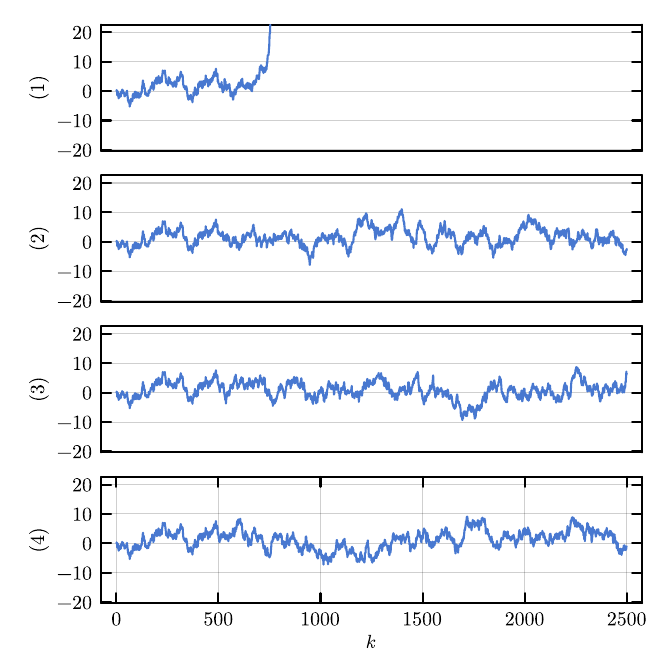} 
\caption{Values of $x_{1,k}$ for the $N=500$ data collection in the initial experiment then the $2000$ time steps under the control law resulting from solving (1) Problem~\ref{prob:CE-LQR} (or, equivalently, Problem~\ref{prob:CE-LQR-state-conforming-regularization} with $\gamma'=0$), and Problem~\ref{prob:CE-LQR-state-conforming-regularization} with (2) $\gamma'=20$, (3) $\gamma'=100$, and (4) $\gamma'=1000$.}\label{fig:Prob5Example_nonlinear}
\end{figure}

The steps of the above procedure, from (i) an initial experiment of $N=500$ bounded data, to (ii) the control design using the above four problems (Problems~\ref{prob:CE-LQR}, and Problem~\ref{prob:CE-LQR-state-conforming-regularization} with $\gamma'=20,\,100,\,1000$) and then (iii) apply each control in feedback for $2000$ time steps, are repeated ((i)--(iii)) for $1000$ repetitions. Then, we calculate the percentage of the stable simulations resulting from each procedure. The stability of each simulation is decided by the boundedness within some high threshold ($50$) in each state coordinate and pointwise in the $2000$ time steps of step (iii). The Lyapunov inequality constraint \eqref{eq:Lyapunov inequality constraint} was active in almost all of these simulations. The percentages of stable simulations are shown in Table~1, further highlighting the importance of data-conforming control design. \add{A\MPar{15a} control gain instances from these simulations are : $K_{Prob.~5/LQR}(\gamma'=0)=\left [-0.50,\,-2.03 \right]$, $K_{Prob.~5}(\gamma'=20)=\left [-0.18,\,-2.31 \right]$, $K_{Prob.~5}(\gamma'=100)=\left [-0.18,\,-3.05 \right]$, and $K_{Prob.~5}(\gamma'=1000)=\left [-0.21,\,-4.42 \right]$.}

\begin{table}[h]
\begin{center}
\begin{tabular}{ |c|c|c|c| } 
 \hline
 \multicolumn{4}{|c|}{Table 1: Percentages of stable simulations (out of 1,000 simulations)} \\
 \hline
 Problem~\ref{prob:CE-LQR}& Problem~\ref{prob:CE-LQR-state-conforming-regularization} & Problem~\ref{prob:CE-LQR-state-conforming-regularization} & Problem~\ref{prob:CE-LQR-state-conforming-regularization}\\
 (Prob.~\ref{prob:CE-LQR-state-conforming-regularization}, $\gamma'=0$) & ($\gamma'=20$)    &($\gamma'=100$) & ($\gamma'=1000$)\\
  \hline
  23.4\% & 83.7\% & 93.2\% & 98.5\% \\
 \hline
\end{tabular}
\end{center}
\end{table}

The examples in \eqref{eq:example1} and \eqref{eq:example2} share, in their structure, that the input enters the system linearly and with constant coefficients. Therefore, the effect of the input is the same on the state evolution, regardless of the contemporaneous state of the system. This is not true in general, and the effect of the input may be coupled with the current state---for instance, in bilinear systems. The following is a modification of \eqref{eq:example2} with a coupled state-input term:
\begin{equation}
\begin{aligned} \label{eq:example3}
    x_{k+1} &= 
    \begin{bmatrix}
        .98& .1\\
        0& .95
    \end{bmatrix}x_k + 
    \begin{pmatrix}
        \theta x_{2,k}^2\\
        0
    \end{pmatrix}
    + \\
    &\hskip15mm\begin{bmatrix}
        0\\
        0.1 + \theta \tanh{x_{1,k}}
    \end{bmatrix}u_k +
    w_k.
\end{aligned}
\end{equation}

Similar to the previous two examples, we run an experimental simulation of this system for $N=500$ time steps and collect the input and state data as in \eqref{eq:data matrix}. At the final time step $k=N=500$, we use the collected data in solving Problems~\ref{prob:CE-LQR}, \ref{prob:CE-LQR-state-conforming-regularization}, and \ref{prob:State-Input Data-Conforming-CE-LQR}, then use the resulting control laws for another $2000$ samples. The results of these simulations are shown in Figure~\ref{fig:Prob6Example}, showing the importance of introducing Problem~\ref{prob:State-Input Data-Conforming-CE-LQR} to account for the coupled state-input effect.

Similar to the construction of Table~1, we run the above procedure for $1000$ repetitions of bounded experiments and calculate the percentages of stable simulations resulting from each of the four control design procedures (Problem~\ref{prob:CE-LQR}, Problem~\ref{prob:CE-LQR-state-conforming-regularization} (with $\gamma'=20,\,1000$) and Problem~\ref{prob:State-Input Data-Conforming-CE-LQR} (with $\gamma=10$)). The results are recorded in Table~2, which shows the importance of using the joint state-input data distribution when the system's nonlinearities contain a coupled state-input effect, even though the Lyapunov inequality constraint \eqref{eq:Lyapunov inequality constraint} was mostly inactive. \add{A\MPar{15b} control gain instances from these simulations are : $K_{Prob.~5/LQR}(\gamma'=0)=\left [-0.63,\,-1.73 \right]$, $K_{Prob.~5}(\gamma'=20)=\left [-0.17,\,-1.16 \right]$, $K_{Prob.~5}(\gamma'=1000)=\left [-0.17,\,-1.15 \right]$, and $K_{Prob.~6}(\gamma=10)=\left [-0.19,\,-8.5 \right]$. Notice that $K_{Prob.~6}(\gamma=10)$ is close to the initial control $K_0$. 

Notice that the increase of $\gamma'$ in the previous example and $\gamma$ in the above one produces a controller closer to $K_0$. A\MPar{16} fundamental difference, however, between the designed controllers and $K_0$, is that the designed controllers are functions of $\gamma$ or $\gamma'$, which control the importance assigned to the distributional shifts penalty. By tuning $\gamma$ or $\gamma'$, the new controller is adjusted in a way respecting our distributional shift preferences, allowing for cautious and systematic change in the closed-loop behavior. On the other hand, $K_0$ is a constant (and in a general scenario, $\kappa_0$ may require intervention and strict experimental settings, as in Remark~\ref{remark:kappa_0}).}

\begin{table}[h]
\begin{center}
\begin{tabular}{ |c|c|c|c| } 
 \hline
 \multicolumn{4}{|c|}{Table 2: Percentages of stable simulations (out of 1000 simulations)} \\
 \hline
 Problem~\ref{prob:CE-LQR}& Problem~\ref{prob:CE-LQR-state-conforming-regularization} & Problem~\ref{prob:CE-LQR-state-conforming-regularization} & Problem~\ref{prob:State-Input Data-Conforming-CE-LQR}\\
 (Prob.~\ref{prob:CE-LQR-state-conforming-regularization}, $\gamma'=0$) & ($\gamma'=20$)    &($\gamma'=1000$) & ($\gamma=10$)\\
  \hline
  15.1\% & 45.5\% & 43.9\% & 99.7\% \\
 \hline
\end{tabular}
\end{center}
\end{table}

We note that in all of these simulations, none of the tested problems returned any feasibility issues. Moreover, the computation time of each problem (about six milliseconds for Problem~\ref{prob:State-Input Data-Conforming-CE-LQR} on an Apple M1 Max MacBook Pro with 64 GB or RAMs) \add{is\MPar{9c} comparable to solving a standard LQR problem (both have computational complexity $\mathcal{O}(r_x^3)$).}

\add{{Introducing\MPar{14} regularization terms to optimization problems often necetates a discussion of the sub-optimality with respect to the original cost. It is important to emphasize that in our context, supported by Tables~1 and 2, the introduction of the regularization terms make it more likely for the achieved cost to exist, by reducing the chances of instability (which leads to infinite cost). That is, with the data-conforming regularization, the achieved cost (resulting from connecting the controller in feedback to the true underlying system) is more likely to be similar to the designed cost (the cost seen by the optimization problem: resulting from connecting the controller in feedback to the approximate linear model). This is due to the fact that the data-conforming control improves the consistency between the behavior of the underlying system in closed-loop and the approximate linear model. This is different for the regularization-free case, where the design cost is finite, but the achieved one is most likely infinite (the closed-loop system is mostly unstable), as shown by Tables~1 and 2.}}

\add{The above examples show the role that data-conforming control can play in enhancing the safety of data-driven approaches. The hyper-parameters $\gamma,\gamma'$ resemble the exploration vs exploitation preferences. Starting\MPar{7} from an identified linear model that is 'sufficiently' valid for the collected data, exploitation is when our data-conforming approaches with high $\gamma,\gamma'$ result in new data that the identified model is still highly valid for (and hence the controller, since it is model-based). Exploration on the other hand, achieved by the reduction of $\gamma,\gamma'$, allows for distributional shifts beyond past data, opening new possibilities and enriching the observations we have about the underlying system.}

\add{In\MPar{11b} practice, and in the absence of accurate physical models or extensive data sets, a control engineer can start with high values of $\gamma, \gamma'$ then gradually reduce them, if needed, to allow some exploration beyond available data. One of the important recommendations of the adaptive control literature \cite{astrom1985commentary,anderson2005failures} is to model as much as possible, and leave any model reduction or abstraction for later. Our data-conforming framework can provide the safety and the time necessary for the control engineer to collect new data, model, and react to any potential activation of unknown/unmodeled nonlinearities.}

\begin{figure}
\centering 
\includegraphics[width=3.2in,height=1.6in]{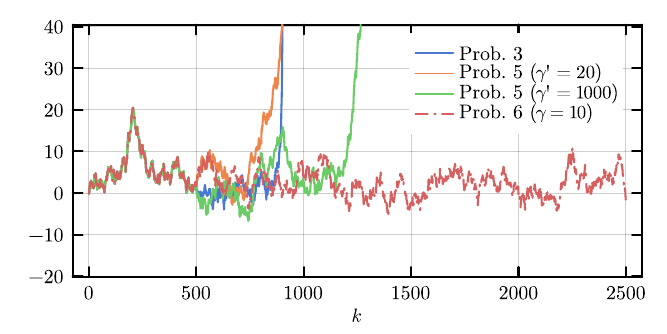} 
\caption{Value of $x_{1,k}$ for the $N=500$ samples of the initial experiment then for $2000$ time steps under the control law resulting from solving Problem~\ref{prob:CE-LQR},  Problem~\ref{prob:CE-LQR-state-conforming-regularization} with $\gamma'=20,\,1000$, and Problem~\ref{prob:State-Input Data-Conforming-CE-LQR} ($\gamma=10$). The last results in the only stabilizing control since it accounts for the joint state-input data.}\label{fig:Prob6Example}
\end{figure}

\section{Conclusion} \label{section: Conclusion}
This paper addresses a problem inherent in many modern data-driven and adaptive control approaches, namely, the problem of the premature and possibly false generalization beyond data, together with its consequences in terms of sudden distributional shifts in the state-input space. We present methods for mitigating this problem through enforcing consistency with the learning data, and we numerically test them. Because of  the formulation of our methods as solutions to SDPs of affine costs and LMI constraints, they are computationally efficient \add{(comparable\MPar{9b} computational complexity to regular LQR)}, can scale up to systems with hundreds in dimension, and can be easily integrated with modern control design approaches.

Further work is being pursued to \add{(i) employ recent techniques \cite{de2023learning} that have the potential to further improve the validity of the linear identification step (ii) relax\MPar{6b} Assumption~\ref{assumption:Gaussianity} and allow for identification using nonlinear parametrizations and regularization in terms of more flexible distributions beyond Gaussian (in the sense of Remark~\ref{remark:general distributions})} (iii) investigate the possibility of developing data-conforming policy gradient surrogates to dampen distributional shifts resulting from standard policy gradient methods.
\section*{Acknowledgment}
This material was based upon work
supported by the U.S. Department of Energy, Office of Science,
Office of Advanced Scientific Computing Research (ASCR) under
Contract DE-AC02-06CH11347. 

\balance
\bibliographystyle{IEEEtran}
\bibliography{References}

\begin{thebibliography}{10}
\providecommand{\url}[1]{#1}
\csname url@samestyle\endcsname
\providecommand{\newblock}{\relax}
\providecommand{\bibinfo}[2]{#2}
\providecommand{\BIBentrySTDinterwordspacing}{\spaceskip=0pt\relax}
\providecommand{\BIBentryALTinterwordstretchfactor}{4}
\providecommand{\BIBentryALTinterwordspacing}{\spaceskip=\fontdimen2\font plus
\BIBentryALTinterwordstretchfactor\fontdimen3\font minus
  \fontdimen4\font\relax}
\providecommand{\BIBforeignlanguage}[2]{{%
\expandafter\ifx\csname l@#1\endcsname\relax
\typeout{** WARNING: IEEEtran.bst: No hyphenation pattern has been}%
\typeout{** loaded for the language `#1'. Using the pattern for}%
\typeout{** the default language instead.}%
\else
\language=\csname l@#1\endcsname
\fi
#2}}
\providecommand{\BIBdecl}{\relax}
\BIBdecl

\bibitem{anderson2005failures}
{Anderson, Brian DO}, ``Failures of adaptive control theory and their
  resolution,'' 2005.

\bibitem{hey2009fourth}
T.~Hey, S.~Tansley, K.~M. Tolle \emph{et~al.}, \emph{The fourth paradigm:
  data-intensive scientific discovery}.\hskip 1em plus 0.5em minus 0.4em\relax
  Microsoft research Redmond, WA, 2009, vol.~1.

\bibitem{ljung1998system}
L.~Ljung, ``System identification,'' in \emph{Signal analysis and
  prediction}.\hskip 1em plus 0.5em minus 0.4em\relax Springer, 1998, pp.
  163--173.

\bibitem{dorato1987historical}
P.~Dorato, ``A historical review of robust control,'' \emph{IEEE Control
  Systems Magazine}, vol.~7, no.~2, pp. 44--47, 1987.

\bibitem{astrom2013adaptive}
\BIBentryALTinterwordspacing
K.~{\AA}str{\"o}m and B.~Wittenmark, \emph{Adaptive Control: Second Edition},
  ser. Dover Books on Electrical Engineering.\hskip 1em plus 0.5em minus
  0.4em\relax Dover Publications, 2013. [Online]. Available:
  \url{https://books.google.com/books?id=4CLCAgAAQBAJ}
\BIBentrySTDinterwordspacing

\bibitem{astrom1985commentary}
K.~Astrom, ``A commentary on the {C.E. Rohrs} et al. paper "robustness of
  continuous-time adaptive control algorithms in the presence of unmodeled
  dynamics",'' \emph{IEEE Transactions on Automatic Control}, vol.~30, no.~9,
  pp. 889--889, 1985.

\bibitem{marafioti2014persistently}
G.~Marafioti, R.~R. Bitmead, and M.~Hovd, ``Persistently exciting model
  predictive control,'' \emph{International Journal of Adaptive Control and
  Signal Processing}, vol.~28, no.~6, pp. 536--552, 2014.

\bibitem{schulman2017proximal}
J.~Schulman, F.~Wolski, P.~Dhariwal, A.~Radford, and O.~Klimov, ``Proximal
  policy optimization algorithms,'' \emph{arXiv preprint arXiv:1707.06347},
  2017.

\bibitem{mareels1988non}
I.~M. Mareels and R.~R. Bitmead, ``Non-linear dynamics in adaptive control:
  Periodic and chaotic stabilization — {II}. analysis,'' \emph{Automatica},
  vol.~24, no.~4, pp. 485--497, 1988.

\bibitem{wang2024fractal}
T.~Wang, S.~Herbert, and S.~Gao, ``Fractal landscapes in policy optimization,''
  \emph{Advances in Neural Information Processing Systems}, vol.~36, 2024.

\bibitem{ramadan2024extended}
M.~S. Ramadan and M.~Anitescu, ``Extended {Kalman} filter -- {Koopman} operator
  for tractable stochastic optimal control,'' \emph{IEEE Control Systems
  Letters}, 2024.

\bibitem{heirung2017dual}
T.~A.~N. Heirung, B.~E. Ydstie, and B.~Foss, ``Dual adaptive model predictive
  control,'' \emph{Automatica}, vol.~80, pp. 340--348, 2017.

\bibitem{sutton2018reinforcement}
R.~S. Sutton and A.~G. Barto, \emph{Reinforcement learning: An
  introduction}.\hskip 1em plus 0.5em minus 0.4em\relax MIT Press, 2018.

\bibitem{safonov2012origins}
M.~G. Safonov, ``Origins of robust control: Early history and future
  speculations,'' \emph{Annual Reviews in Control}, vol.~36, no.~2, pp.
  173--181, 2012.

\bibitem{ekomwenrenren2023data}
E.~Ekomwenrenren, J.~W. Simpson-Porco, E.~Farantatos, M.~Patel, A.~Haddadi, and
  L.~Zhu, ``Data-driven fast frequency control using inverter-based
  resources,'' \emph{IEEE Transactions on Power Systems}, 2023.

\bibitem{yuan2024reinforcement}
Z.~Yuan, C.~Zhao, and J.~Cort{\'e}s, ``Reinforcement learning for distributed
  transient frequency control with stability and safety guarantees,''
  \emph{Systems \& Control Letters}, vol. 185, p. 105753, 2024.

\bibitem{bai2022training}
Y.~Bai, A.~Jones, K.~Ndousse, A.~Askell, A.~Chen, N.~DasSarma, D.~Drain,
  S.~Fort, D.~Ganguli, T.~Henighan \emph{et~al.}, ``Training a helpful and
  harmless assistant with reinforcement learning from human feedback,''
  \emph{arXiv preprint arXiv:2204.05862}, 2022.

\bibitem{coulson2019data}
J.~Coulson, J.~Lygeros, and F.~D{\"o}rfler, ``Data-enabled predictive control:
  In the shallows of the {DeePC},'' in \emph{2019 18th European Control
  Conference (ECC)}.\hskip 1em plus 0.5em minus 0.4em\relax IEEE, 2019, pp.
  307--312.

\bibitem{willems2005note}
J.~C. Willems, P.~Rapisarda, I.~Markovsky, and B.~L. De~Moor, ``A note on
  persistency of excitation,'' \emph{Systems \& Control Letters}, vol.~54,
  no.~4, pp. 325--329, 2005.

\bibitem{de2023learning}
{De Persis, Claudio and Tesi, Pietro}, ``Learning controllers for nonlinear
  systems from data,'' \emph{Annual Reviews in Control}, p. 100915, 2023.

\bibitem{albertos2012iterative}
P.~Albertos and A.~S. Piqueras, \emph{Iterative identification and control:
  advances in theory and applications}.\hskip 1em plus 0.5em minus 0.4em\relax
  Springer Science \& Business Media, 2012.

\bibitem{kailath1980linear}
T.~Kailath, \emph{Linear systems}.\hskip 1em plus 0.5em minus 0.4em\relax
  Prentice-Hall Englewood Cliffs, NJ, 1980, vol. 156.

\bibitem{rohrs1985robustness}
C.~Rohrs, L.~Valavani, M.~Athans, and G.~Stein, ``Robustness of continuous-time
  adaptive control algorithms in the presence of unmodeled dynamics,''
  \emph{IEEE Transactions on Automatic Control}, vol.~30, no.~9, pp. 881--889,
  1985.

\bibitem{anderson2002windsurfing}
B.~D. Anderson, ``Windsurfing approach to iterative control design,'' in
  \emph{Iterative Identification and Control: Advances in Theory and
  Applications}.\hskip 1em plus 0.5em minus 0.4em\relax Springer, 2002, pp.
  143--166.

\bibitem{cabral2004unfalsified}
F.~B. Cabral and M.~G. Safonov, ``Unfalsified model reference adaptive control
  using the ellipsoid algorithm,'' \emph{International Journal of Adaptive
  Control and Signal Processing}, vol.~18, no.~8, pp. 683--696, 2004.

\bibitem{agarwal2020optimistic}
R.~Agarwal, D.~Schuurmans, and M.~Norouzi, ``An optimistic perspective on
  offline reinforcement learning,'' in \emph{International conference on
  machine learning}.\hskip 1em plus 0.5em minus 0.4em\relax PMLR, 2020, pp.
  104--114.

\bibitem{fujimoto2021minimalist}
S.~Fujimoto and S.~S. Gu, ``A minimalist approach to offline reinforcement
  learning,'' \emph{Advances in Neural Information Processing Systems},
  vol.~34, pp. 20\,132--20\,145, 2021.

\bibitem{zhao2024data}
F.~Zhao, F.~D{\"o}rfler, A.~Chiuso, and K.~You, ``Data-enabled policy
  optimization for direct adaptive learning of the {LQR},'' \emph{arXiv
  preprint arXiv:2401.14871}, 2024.

\bibitem{fazel2018global}
M.~Fazel, R.~Ge, S.~Kakade, and M.~Mesbahi, ``Global convergence of policy
  gradient methods for the linear quadratic regulator,'' in \emph{International
  conference on machine learning}.\hskip 1em plus 0.5em minus 0.4em\relax PMLR,
  2018, pp. 1467--1476.

\bibitem{de2019formulas}
C.~De~Persis and P.~Tesi, ``Formulas for data-driven control: Stabilization,
  optimality, and robustness,'' \emph{IEEE Transactions on Automatic Control},
  vol.~65, no.~3, pp. 909--924, 2019.

\bibitem{schon2011system}
T.~B. Sch{\"o}n, A.~Wills, and B.~Ninness, ``System identification of nonlinear
  state-space models,'' \emph{Automatica}, vol.~47, no.~1, pp. 39--49, 2011.

\bibitem{bertsekas2012dynamic}
D.~Bertsekas, \emph{Dynamic programming and optimal control: Volume I}.\hskip
  1em plus 0.5em minus 0.4em\relax Athena scientific, 2012, vol.~4.

\bibitem{boyd1993control}
S.~Boyd, V.~Balakrishnan, E.~Feron, and L.~ElGhaoui, ``Control system analysis
  and synthesis via linear matrix inequalities,'' in \emph{1993 American
  Control Conference}.\hskip 1em plus 0.5em minus 0.4em\relax IEEE, 1993, pp.
  2147--2154.

\bibitem{boyd1994linear}
S.~Boyd, L.~El~Ghaoui, E.~Feron, and V.~Balakrishnan, \emph{Linear matrix
  inequalities in system and control theory}.\hskip 1em plus 0.5em minus
  0.4em\relax SIAM, 1994.

\bibitem{dean2020sample}
S.~Dean, H.~Mania, N.~Matni, B.~Recht, and S.~Tu, ``On the sample complexity of
  the linear quadratic regulator,'' \emph{Foundations of Computational
  Mathematics}, vol.~20, no.~4, pp. 633--679, 2020.

\bibitem{van2021matrix}
H.~J. van Waarde and M.~K. Camlibel, ``A matrix {F}insler’s lemma with
  applications to data-driven control,'' in \emph{2021 60th IEEE Conference on
  Decision and Control (CDC)}.\hskip 1em plus 0.5em minus 0.4em\relax IEEE,
  2021, pp. 5777--5782.

\bibitem{anderson1958introduction}
T.~W. Anderson, T.~W. Anderson, T.~W. Anderson, T.~W. Anderson, and E.-U.
  Math{\'e}maticien, \emph{An introduction to multivariate statistical
  analysis}.\hskip 1em plus 0.5em minus 0.4em\relax Wiley New York, 1958,
  vol.~2.

\bibitem{peterson1995efficient}
L.~D. Peterson, ``Efficient computation of the eigensystem realization
  algorithm,'' \emph{Journal of Guidance, Control, and Dynamics}, vol.~18,
  no.~3, pp. 395--403, 1995.

\bibitem{ramadan2025floodgates}
M.~Ramadan, E.~Toler, and M.~Anitescu, ``Floodgates up to contain the deepc and
  limit extrapolation,'' \emph{arXiv preprint arXiv:2501.17318}, 2025.

\bibitem{thomas2006elements}
M.~Thomas and A.~T. Joy, \emph{Elements of information theory}.\hskip 1em plus
  0.5em minus 0.4em\relax Wiley-Interscience, 2006.

\bibitem{mohajerin2018data}
P.~Mohajerin~Esfahani and D.~Kuhn, ``Data-driven distributionally robust
  optimization using the {Wasserstein} metric: Performance guarantees and
  tractable reformulations,'' \emph{Mathematical Programming}, vol. 171, no.~1,
  pp. 115--166, 2018.

\bibitem{pardo2018statistical}
L.~Pardo, \emph{Statistical inference based on divergence measures}.\hskip 1em
  plus 0.5em minus 0.4em\relax Chapman and Hall/CRC, 2018.

\bibitem{boyd2004convex}
S.~P. Boyd and L.~Vandenberghe, \emph{Convex optimization}.\hskip 1em plus
  0.5em minus 0.4em\relax Cambridge University Press, 2004.

\bibitem{horn2012matrix}
R.~A. Horn and C.~R. Johnson, \emph{Matrix analysis}.\hskip 1em plus 0.5em
  minus 0.4em\relax Cambridge University Press, 2012.

\bibitem{ramadan2024dampening}
M.~Ramadan and M.~Anitescu, ``Dampening parameter distributional shifts under
  robust control and gain scheduling,'' \emph{arXiv preprint arXiv:2411.16566},
  2024.

\bibitem{petersen2008matrix}
K.~B. Petersen, M.~S. Pedersen \emph{et~al.}, ``The matrix cookbook,''
  \emph{Technical University of Denmark}, vol.~7, no.~15, p. 510, 2008.

\bibitem{helmberg2002semidefinite}
C.~Helmberg, ``Semidefinite programming,'' \emph{European Journal of
  Operational Research}, vol. 137, no.~3, pp. 461--482, 2002.

\end{thebibliography}
\vspace{0.1cm}
\begin{flushright}
	\scriptsize \framebox{\parbox{2.5in}{Government License: The
			submitted manuscript has been created by UChicago Argonne,
			LLC, Operator of Argonne National Laboratory (``Argonne").
			Argonne, a U.S. Department of Energy Office of Science
			laboratory, is operated under Contract
			No. DE-AC02-06CH11357.  The U.S. Government retains for
			itself, and others acting on its behalf, a paid-up
			nonexclusive, irrevocable worldwide license in said
			article to reproduce, prepare derivative works, distribute
			copies to the public, and perform publicly and display
			publicly, by or on behalf of the Government. The Department of Energy will provide public access to these results of federally sponsored research in accordance with the DOE Public Access Plan. http://energy.gov/downloads/doe-public-access-plan. }}
	\normalsize
\end{flushright}	

\begin{IEEEbiography}[{\includegraphics[width=1in,height=1.25in,clip,keepaspectratio]{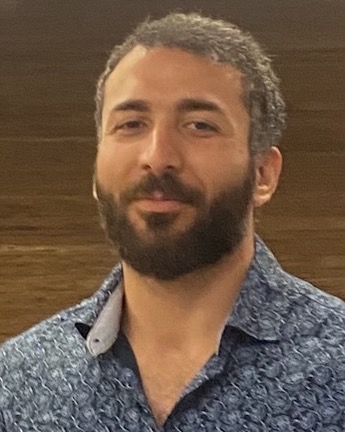}}]{Mohammad S. Ramadan} (Member, IEEE) is a postdoctoral appointee in the Mathematics and Computer Science Division at Argonne National Laboratory. He obtained his B.S. degree in aeronautical engineering from Jordan University of Science and Technology in 2016 and his Ph.D. in mechanical and aerospace engineering from University of California San Diego in 2023. His research interests include stochastic optimal control, state estimation, and data-driven robust control.
\end{IEEEbiography}

\begin{IEEEbiography}[{\includegraphics[width=1in,height=1.25in,clip,keepaspectratio]{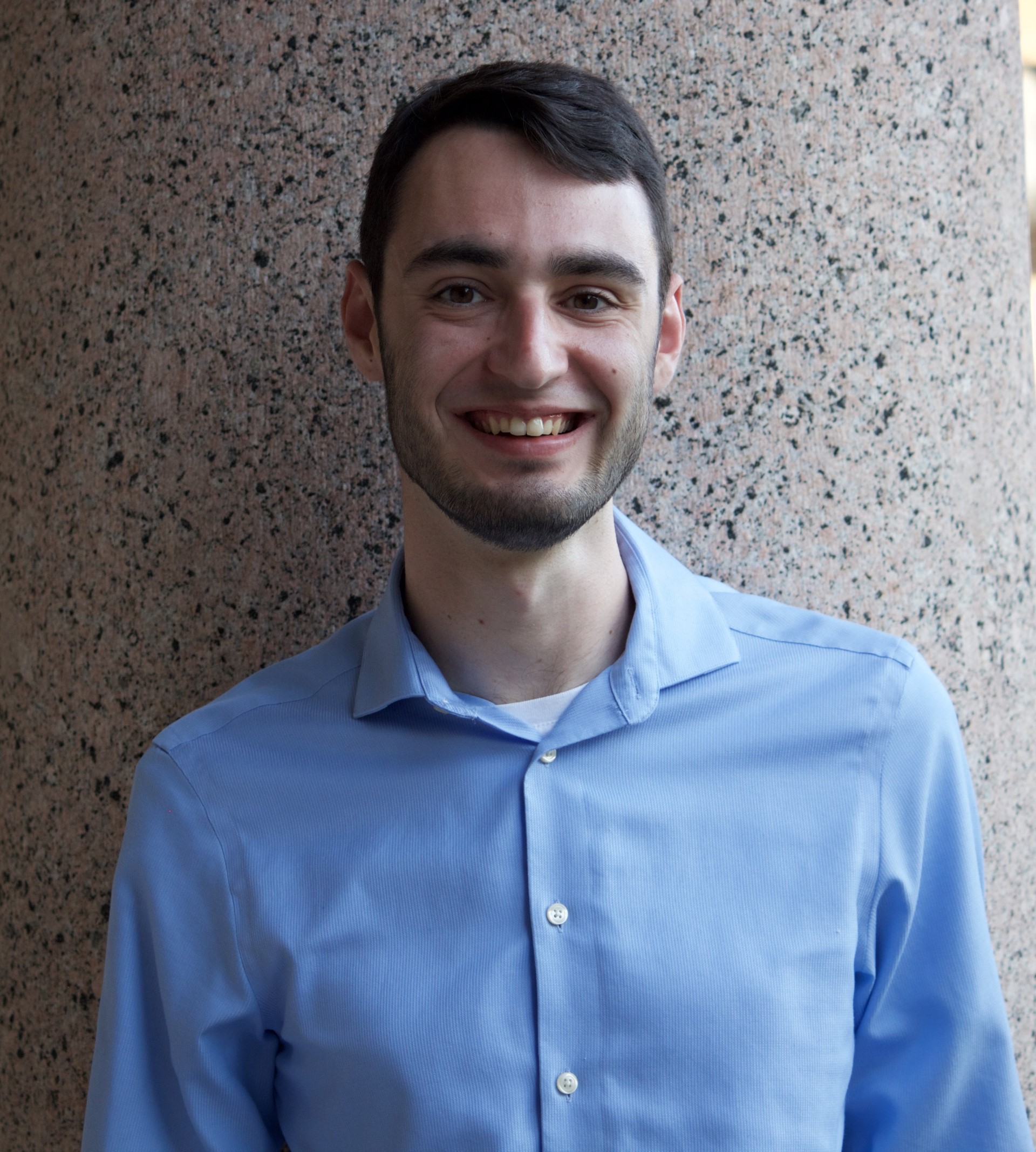}}]{Evan Toler} is a postdoctoral appointee in the Mathematics and Computer Science Division at Argonne National Laboratory. He received his Ph.D. in mathematics from the Courant Institute of Mathematical Sciences at New York University, where he was supported by a National Science Foundation Graduate Student Fellowship. He received his B.A. in computational and applied math and statistics from Rice University. His research interests are in numerical methods and optimization for systems governed by partial differential equations. He is particularly interested in applications in magnetic confinement fusion.
\end{IEEEbiography}

\begin{IEEEbiography}[{\includegraphics[width=1in,height=1.25in,clip,keepaspectratio]{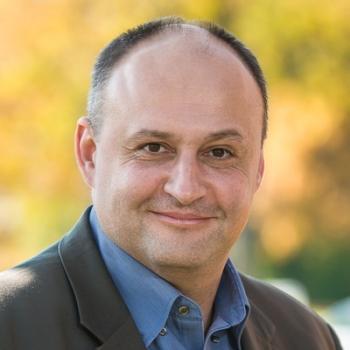}}]{Mihai Anitescu} (Member, IEEE) is a senior computational mathematician in the Mathematics and Computer Science Division at Argonne National Laboratory and a professor in the Department of Statistics at the University of Chicago. He obtained his engineer diploma (electrical engineering) from the Polytechnic University of Bucharest in 1992 and his Ph.D. in applied mathematical and computational sciences from the University of Iowa in 1997. He specializes in
the areas of numerical optimization, computational science, numerical analysis and uncertainty quantification in which he has published more than 100 papers in scholarly journals and book chapters. He has been recognized for his work in applied mathematics by his selection as a SIAM Fellow in 2019.
\end{IEEEbiography}

\appendix
\section{Appendix}

\subsection{The two descriptions of the cost $J$}
\label{Appendix:cost descriptions: sum and limit}
\add{We\MPar{5c} assume that $x_k$ and $u_k$ have bounded first two moments (possibly under a stabilizing feedback control law) for all $k=0,1,\hdots,$ and that they converge to stationary processes in steady-state ($k \to \infty$). Starting from the first description (for compactness, let $\ell(x_k, u_k) = x_k^\top Q x_k + u_k^\top R u_k$),
\begin{align*}
    J &= \lim_{T \to \infty} \E \left \{ \frac{1}{T} \sum_{k=0}^T \ell(x_k, u_k) \right \},\\
    &= \lim_{T \to \infty}  \frac{1}{T} \sum_{k=0}^T \E \ell(x_k, u_k) ,\\
    &= \lim_{T \to \infty} \frac{1}{T} \Big \{ \sum_{k'=0}^{T_0-1} \E \ell(x_{k'}, u_{k'})  +  \sum_{k=T_0}^{T} \E  \ell(x_k, u_k) \Big \},
\end{align*}
and the first term vanishes for any $T_0$, so we have
\begin{align*}
    J &= \lim_{T \to \infty} \frac{1}{T}  \sum_{k=T_0}^{T} \E \left \{ \ell(x_k, u_k) \right \}.
\end{align*}
Using the boundedness and asymptotic stationarity assumptions above, for any $\epsilon >0$, there is a $T_0$ such that $| \E \ell(x_{T_0}, u_{T_0}) -  \E \ell(x_k, u_k) | < \epsilon$, for all $k \geq T_0$, and therefore,
\begin{align*}
    &|\frac{T-T_0}{T}\E \ell(x_{T_0}, u_{T_0}) -\frac{1}{T}  \sum_{k=T_0}^{T} \E \ell(x_k, u_k)|\\
    &\hskip 5mm \leq \frac{1}{T}  \sum_{k=T_0}^{T} |\E \ell(x_{T_0}, u_{T_0})- \E  \ell(x_k, u_k)| < \frac{T-T_0}{T} \epsilon \leq \epsilon.
\end{align*}
This holds for any $T \geq T_0$, so
\begin{align*}
    &|\lim_{T \to \infty} \frac{T-T_0}{T}\E \ell(x_{T_0}, u_{T_0}) -J|=  |\E \ell(x_{T_0}, u_{T_0}) -J| < \epsilon.
\end{align*}
And $\epsilon$ can be made arbitrarily small, therefore $J = \lim_{T_0 \to \infty} \E \ell(x_{T_0}, u_{T_0})$, which is the second description in \eqref{eq:costFunction}.}

\subsection{The cost in the $P-$parametrization} \label{Appendix:P parametrization cost}
Starting from on the descriptions of the cost \eqref{eq:costFunction}, say
\begin{align*}
    J = \lim_{k \to \infty} \E \left \{x_k^\top Q x_k + u_k^\top R u_k \right \},
\end{align*}
and after substituting $u_k = K x_k +v_k$ \add{($v_k$ has zero mean, covariance $V$, and is independent of $x_k$)},
\begin{align*}
    J = \lim_{k \to \infty} \E \left \{x_k^\top \left [ Q + K^\top R K \right ] x_k + v_k^\top R v_k \right \}.
\end{align*}
Using the cyclic property of the trace and the linearity of $\E$, $\E \left \{v_k^\top R v_k \right \} = \trace \left ( R V \right )$, which is an additive constant (for a fixed $V$) and can be omitted from the cost. Now, using the cyclic property of the trace and the linearity of $\E$ again, we have
\begin{align}
    J &\propto \lim_{k \to \infty} \E \left \{x_k^\top \left [ Q + K^\top R K \right ] x_k \right \},\nonumber\\
    &=  \lim_{k \to \infty} \E \trace \left ( \left [ Q + K^\top R K \right ] x_k x_k^\top \right ),\nonumber\\
    &= \trace \left( \left [ Q + K^\top R K \right ] \lim_{k \to \infty} \E \left \{x_k x_k^\top \right \} \right),\nonumber\\
    &= \trace \left ( \left [ Q + K^\top R K \right ] \Sigma \right). \label{eq:App:cost description with Sigma}
\end{align}
The steady-state covariance of the state is designated by $\Sigma = \lim_{k \to \infty} \E \left \{x_k x_k^\top \right \}$, which is also known as the controllability-type Gramian. Since
\begin{align*}
    x_k = \left [ A + B K \right ]^k x_0 + \sum_{i=0}^{k-1} \left[ A+BK\right]^{k-1-i} \left ( w_i + B v_i \right ),
\end{align*}
and using the fact that $v_i,\,w_i$ are white, of zero mean and mutually independent from each other and from $x_0$, this covariance can be described by
\begin{align*}
    \Sigma &= \lim_{k \to \infty} \Big \{ \left [ A + B K \right ]^k \Sigma_{x_0} \left [ A + B K \right ]^{k,\,\top} + \\
    &\hskip-5mm\sum_{i=0}^{k-1}  \left[ A+BK\right]^{k-1-i} \left [ W + B V B^\top \right ]\left[ A+BK\right]^{k-1-i,\,\top}\Big \},
\end{align*}
or, if $A+BK$ is hurwitz, the transient term vanishes and we have
\begin{align*}
    \Sigma = \sum_{k=0}^{\infty}  \left[ A+BK\right]^{k} \left [ W + B V B^\top \right ]\left[ A+BK\right]^{k,\,\top}.
\end{align*}
\add{This completes the derivation of the cost under the controllability-type Gramian parametrization \eqref{eq:cost Controllability type}. For the observability-type one, after substituting this description of $\Sigma$ above in \eqref{eq:App:cost description with Sigma}, and using the cyclic property of the trace operator, we have (up to an additive constant)}
\begin{align*}
    J &= \trace \left ( \left [ Q + K^\top R K \right ] \Sigma \right) \\
    &=\trace \Big ( \left [ Q + K^\top R K \right ] \times \\
    &\hskip 5mm\sum_{k=0}^{\infty}  \left[ A+BK\right]^{k} \left [ W + B V B^\top \right ]\left[ A+BK\right]^{k,\,\top} \Big),\\
    &= \trace \Big ( \sum_{k=0}^{\infty} \left[ A+BK\right]^{k,\,\top} \left [ Q + K^\top R K \right ]   \left[ A+BK\right]^{k} \times \\
    &\hskip5mm\left [ W + B V B^\top \right ] \Big),\\
    &= \trace \left ( P \left [ W + B V B^\top \right ] \right),
\end{align*}
where $P$ is the observability-type Gramian, given by
\begin{align*}
    P = \sum_{k=0}^\infty  [A+BK]^{k\,\top} \left [ Q + K^\top R K \right ] [A+BK]^k.
\end{align*}
\end{document}